\newtheorem{lemma}{Lemma}
\begin{document}
\title{How Much Data is Needed for Channel Knowledge Map Construction?}
\author{Xiaoli~Xu, \emph{Member, IEEE,} Yong Zeng, \emph{Senior Member, IEEE,}
\thanks{X. Xu and Y. Zeng (Corresponding author) are with the National Mobile Communications Research Laboratory, School of Information Science and Engineering, Southeast University, Nanjing 210096, China (email: {xiaolixu, yong\_zeng}@seu.edu.cn). This work was supported by the National Natural Science Foundation of China with grant number 6504009712 and 62071114.}
}

\maketitle
\begin{abstract}
Channel knowledge map (CKM) has been recently proposed to enable environment-aware communications by utilizing historical or simulation generated  wireless channel data.  This paper studies the construction of one particular type of CKM, namely channel gain map (CGM), by using a finite number of measurements or simulation-generated data, with model-based spatial channel prediction. We try to answer the following question: How much data is sufficient for CKM  construction? To this end, we first derive the average mean square error (AMSE) of the channel gain prediction as a function of the sample density of data collection for offline CGM construction, as well as  the number of data points used for online spatial channel gain prediction. To model  the spatial variation of the wireless environment even within each cell, we divide the CGM into subregions and estimate the channel parameters from the local data within each subregion. The parameter estimation error and the channel prediction error based on  estimated channel parameters are derived as functions of the number of data points within the subregion. The analytical results provide useful guide for CGM construction and utilization  by determining the required spatial sample density for offline data collection and  number of  data points to be used for online channel prediction, so that  the desired level of channel prediction accuracy is guaranteed.
\end{abstract}

\begin{IEEEkeywords}
Channel gain map (CGM), environment-aware communication, spatial channel prediction, parameter estimation, average mean square error, map construction.
\end{IEEEkeywords}

\section{Introduction}
With the ever-increasing node density and channel dimension in wireless communication networks, the acquisition of real-time channel state information (CSI) purely relying on the conventional pilot based channel estimation becomes costly, and even infeasible in rate-hungry and delay-stringent applications. On the other hand, the abundant location-specific channel data and powerful data-mining capability of wireless networks make it possible to shift from the conventional environment-unaware communication to future environment-aware communication, which facilitates CSI acquisition \cite{CKMTutorial}. One promising technique for realizing environment-aware communication is by leveraging channel knowledge map (CKM), which makes use of the available channel knowledge learnt from the physical environment and/or the wireless measurements \cite{Zeng2021}. As shown in Fig.~\ref{F:CKM}, CKM can be constructed by using location-tagged data from the physical-map assisted simulation or the actual channel measurements by network devices. The site-specific CKM can be stored and managed at the base station (BS). During the CKM utilization stage, the user's physical or virtual  location is used to query the CKM for obtaining a priori channel knowledge between the BS and the user. With such a priori channel knowledge, the required online training overhead for real-time CSI acquisition can be greatly reduced \cite{WuDi2023}.  In addition, the training results could be feedback to the construction algorithm for updating the CKM, so as to reflect the environment dynamics.

\begin{figure}[htb]
\centering
\includegraphics[width=0.5\textwidth]{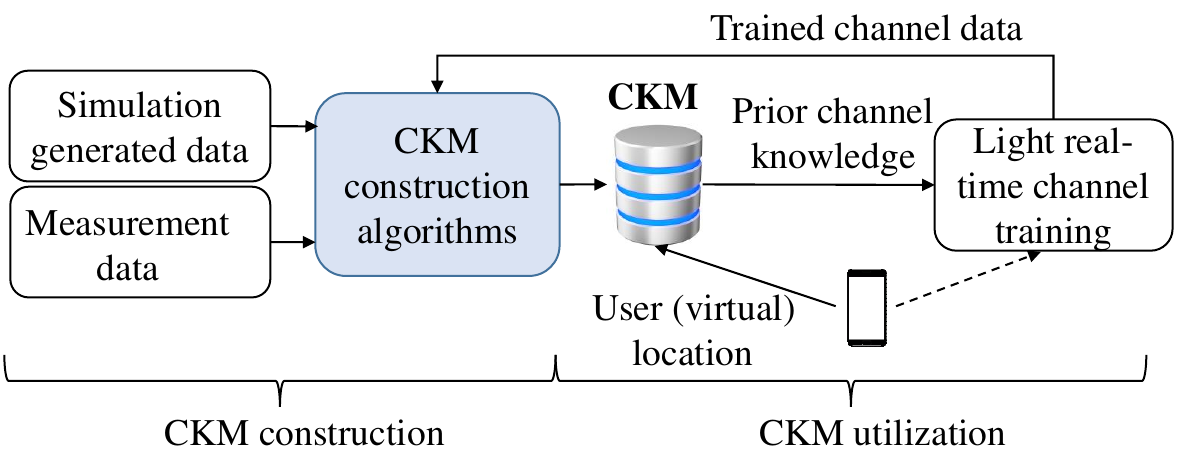}
\caption{An illustration of CKM concept.}
\label{F:CKM}
\end{figure}

 This paper focuses on the construction of a specific type of CKM, termed channel gain map (CGM), which provides channel gain knowledge at arbitrary location within each cell. The most straightforward approach for channel gain prediction is by utilizing stochastic channel models, such as the distance-dependent path loss model. However, such models only use the very high-level environment attributes, such as the environment type: urban, suburban or rural \cite{3GPP}. As a result, the channel gain is only coarsely related to the user location via its  distance from the BS, without considering the specific environment. On the other hand, CGM is a site-specific database constructed based on the actual local wireless environment that is learnt from the physical map or the actual channel measurements. The spatial and temporal correlation of channel gains make it possible to construct  CGM by using a finite number of data samples.

Channel gains of wireless links are typically modelled by  three components, i.e., the path loss, large-scale shadowing and small-scale multi-path fading. The path loss and shadowing are mainly affected by the propagation environment, e.g., the blockage due to building, terrain and plants, which can be considered as relatively static. On the other hand, the multi-path fading de-correlates fast in both the temporal and spatial domain, which is hardly to be predicted in complex environment. Hence, this paper focuses on the \emph{spatial} prediction for CGM construction and aims to determine the required spatial sample density for achieving the desired level of accuracy in channel prediction.   Intuitively, the larger the spatial sample density is, the more likely to find the data points that are close and hence highly correlated with the channel at the target location. However, larger spatial sample density implies high construction and maintenance cost of the CGM. Besides, the number of neighboring data points used for online channel prediction also affects the prediction error. The effect of measurement error and small-scale multi-path fading can be mitigated by including more samples. However, including more samples not only incurs larger computational cost, but also introduce bias if uncorrelated samples are included.

The spatial prediction of channel gain can be achieved in different ways, e.g., the model-based prediction \cite{Malmirchegini12}\cite{Thrane2020}\cite{Liu2023}, data-based prediction \cite{Chowdappa18}\cite{KrigingInference}\cite{VerFunRaj} and the hybrid model and data-based  prediction \cite{Li2022EM}. For model-based prediction, the large-scale path loss at the target location is modeled as a deterministic function of the distance between the target location and the BS. Then, the shadowing at the target location is estimated from the shadowing experienced by the neighboring locations in CGM, which is obtained by subtracting their respective path loss from the measured channel gain. Given the path loss parameters and the spatial correlation model \cite{ShadowingModel}, the minimum mean square error (MMSE) prediction has been proposed in \cite{Malmirchegini12}. The authors in \cite{Malmirchegini12} also investigate the maximum likelihood (ML) and least square (LS) estimation of the channel parameters from the sample measurements. The extension of the spatial prediction algorithms to the spatial-temporal prediction are discussed in \cite{Dionysios15} and \cite{Liao15}, where the temporal predictions are achieved by nonlinear filtering and autoregressive prediction, respectively. Since the path loss and spatial correlation between channel gains are modelled as functions of distances, the accuracy of the model-based spatial prediction is sensitive to localization error \cite{Muppirisetty16}. For data-based prediction, the channel gain is viewed as spatially distributed data and the geostatical tools \cite{GeoSta} are adopted to perform the spatial interpolation, such as the $k$-nearest neighbors (KNN), inverse-distance weight (IDW) and the Kriging Algorithms \cite{Chowdappa18}. Besides, the collected data points can also be used to train a deep neural network (DNN), which is then used to predict the channel gain at the target location \cite{DeepREM}. The performance of the data-based prediction can be improved if the environment information is properly used, e.g., the 3D physical map is used to assist the DNN training process in \cite{Krijestorac2011}. The authors in \cite{Li2022EM} considers the hybrid model and data based channel prediction, where the collected channel  data is first divided into different groups based on the expectation maximization (EM) algorithm, and then  model-based spatial prediction is applied within each data group. However, such data division is only effective when the environment is simple so that the boundary of each group can be clearly identified.

In general, the model-based prediction algorithms  exploit the expert knowledge and enable the estimation under limited data set. However, the performance  of model-based prediction algorithms relies on the accuracy of the channel gain model. On the other hand, the data-driven methods rely on the amount of data available. Compared to model-based prediction, data-based methods  are usually more flexible and may achieve better performance in  complex environment if sufficient data is available. However, data-driven methods are usually not as tractable as the model-based method. To get some insights on the question: ``How much data is needed for CKM construction?", we adopt the classical model-based prediction in this paper and analyze channel prediction performance as a function of the spatial sample density for offline CGM construction, as well as the number of data points used for online channel prediction. Specifically, we consider two types of spatial sample distributions, i.e., the random distribution modelled by homogeneous Poisson point process (PPP) and the grid distribution with certain separation between adjacent samples. When the channel modeling parameters are known, we first derive the average mean square error (AMSE) of the channel gain prediction as a function of the spatial sample density, when only the nearest data point is used for online channel prediction. Then,  the results are extended to the channel prediction by using any number of data points in CKM. The analytical results show that the marginal improvement for AMSE decreases when more samples are included, which implies that a limited number of data points should be sufficient for CKM construction.

One of the main challenges for model-based channel prediction is that the wireless environment may vary significantly even within each site, e.g., the users may experience frequent shift between  line-of-sight (LoS) link and non-LoS (NLoS) link as they move in urban environment. It was shown in \cite{Karttunen2017} that even for a relatively small area, different channel gain models need to be used for users in different streets. Hence, it is desired to establish the local channel gain models for CGM construction. Given the sample density, the expected number of samples within the local region reduces with the region size. The limited number of samples may lead to large error in channel parameter estimation. Since the channel prediction is more sensitive to the path loss parameters, we derive the estimation error of the path loss exponent and path loss intercept as functions of the number of samples within the region. We also point out that the estimation error in channel parameters may not lead to  error in channel prediction, since the correlated shadowing may be counted in the path loss intercept in a small region. Finally, the analytical results are verified by extensive simulations and a case study with actual CGM construction from sparse samples is presented to validate the proposed CGM construction method.

The rest of this paper is organized as follows. The CGM construction problem is formulated in Section~\ref{sec:model}. Section~\ref{sec:prediction} presents the CGM-based spatial channel prediction methods and analyzes the achievable AMSE when the model parameters are known. Section~\ref{sec:estiPara} considers the channel parameter estimation and spatial prediction based on local measurements within a small region. The analytical expressions are verified by the numerical results in Section~\ref{sec:numerical}, and finally Section~\ref{sec:con} draws the conclusion of this paper and summarizes the key results from this study.

\section{System Model}\label{sec:model}
We consider the BS of a macro cell that wants to construct a CGM to enable environment-aware communications. The CGM stores the channel gains at some limited sample locations, which are used to infer the channel gains at any location within the cell, based on the spatial predictability of the channel. According to \cite{Zeng2021}, the CGM can be constructed either by  channel measurements returned by network users at random locations, or by dedicated channel measurement devices at designed locations. Therefore, we consider the following two distributions of the data measurement locations:
\begin{itemize}
\item{\emph{Random Sampling}: Data is collected at random locations, modeled by homogeneous PPP with density $\lambda$.}
\item{\emph{Grid Sampling}: Data is collected  at grid points  separated with a certain distance $d$. }
\end{itemize}

 \begin{figure}[h]
\centering
\begin{subfigure}{0.225\textwidth}
\centering
\includegraphics[width=\textwidth]{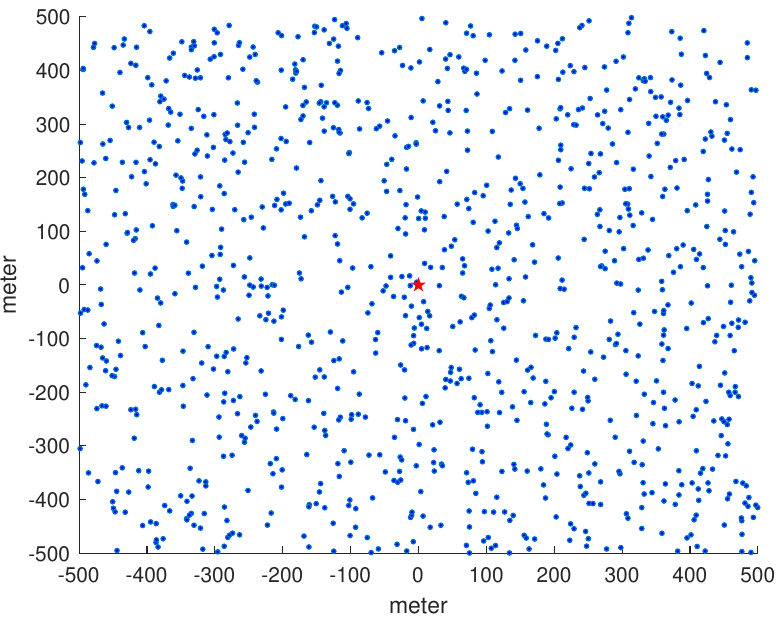}
\caption{Random Sampling}
\end{subfigure}
\begin{subfigure}{0.225\textwidth}
\centering
\includegraphics[width=\textwidth]{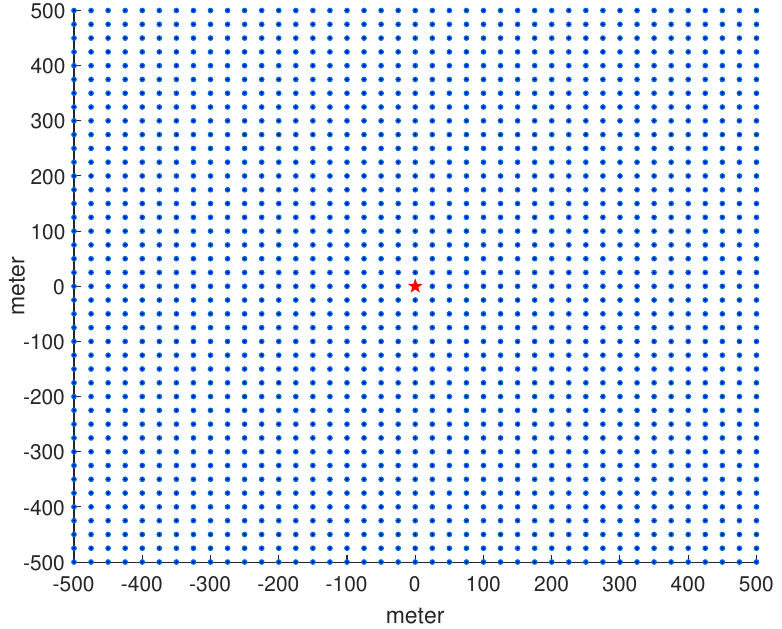}
\caption{Grid Sampling}
\end{subfigure}
\caption{Distribution of data collecting locations for CGM construction in a square area with side length $D=1$km. The red star is the location of the BS. }
\label{F:randomGrid}
\end{figure}

Fig.~\ref{F:randomGrid} presents an illustration example for the random and grid sampling of the data collecting locations. The CGM constructed by the above two methods are referred to as \emph{random CGM} and \emph{grid CGM}, and denoted by $\mathcal{M}_{r}$ and $\mathcal{M}_g$, respectively. The location of the BS is denoted as the origin and the cell is assumed to be a square area, denoted by $\mathcal{A}$, with side length $D$. The number of data points to construct the CGM is $\lambda D^2$ and $(\frac{D}{d})^2$ in random and grid CGM, respectively. According to the spatial predictability of the channel, the channel gain at any arbitrary location $\mathbf{q}\in\mathcal{A}$, denoted by ${\Upsilon}_\mathrm{dB}(\mathbf{q})$, can be estimated from the collected data stored in the CGM. The prediction function is written as
\begin{align}
\tilde{\Upsilon}_\mathrm{dB}(\mathbf{q})=f(\mathbf{q},\mathcal{M}_c), c\in\{r,g\}.
\end{align}

The AMSE of the spatial prediction over all target locations within the area is defined as
\begin{align}
AMSE=\frac{1}{|\mathcal{A}|}\int_{\mathbf q\in \mathcal{A}}\left(\Upsilon_{\mathrm{dB}}(\mathbf{q})-\tilde{\Upsilon}_{\mathrm{dB}}(\mathbf{q})\right)^2d\mathcal A, \label{eq:AMSE}
\end{align}
where $|\mathcal{A}|$ is the size of the area.

Intuitively, the accuracy of the spatial channel prediction improves with the density of the data collecting locations, represented by $\lambda$ and $d$ in $\mathcal{M}_r$ and $\mathcal{M}_g$, respectively. Besides,  the complexity of the estimation algorithm also plays an important role in the spatial prediction, which can be reflected by the number of neighboring data points in CGM used for online channel prediction. In order to answer the question ``how much data is needed for CKM construction?", we  derive the relationship between the AMSE with the data collection density and data used for online prediction.

\section{CGM-based Channel Prediction with Known Parameters}\label{sec:prediction}
In the wireless communication literature, it is well established that the channel gain constitutes of three major components, i.e., the path loss, shadowing and multipath fading. Denote by $\Upsilon(\mathbf{q})$ the channel gain between the BS located at the origin and a receiver at location $\mathbf{q}$. We have $\Upsilon(\mathbf{q})=\Upsilon_{\mathrm{PL}}(\mathbf{q})\Upsilon_{\mathrm{SH}}(\mathbf{q})\Upsilon_{\mathrm{MP}}(\mathbf{q})$, where $\Upsilon_{\mathrm{PL}}(\mathbf{q})$, $\Upsilon_{\mathrm{SH}}(\mathbf{q})$ and $\Upsilon_{\mathrm{ML}}(\mathbf{q})$ represent the impact of path loss, shadowing and multipath fading, respectively. With $\Upsilon_{\mathrm{dB}}(\mathbf q)=10\log_{10}(\Upsilon(\mathbf{q}))$ being the channel gain in dB, we have
\begin{align}
\Upsilon_{\mathrm{dB}}(\mathbf q)=K_{\mathrm{dB}}-10n_{pL}\log_{10}(\|\mathbf q\|)+v(\mathbf q)+\omega(\mathbf q), \label{eq:Ch}
\end{align}
where $K_{\mathrm{dB}}$ accounts the path loss intercept, $n_{PL}$ is the path loss exponent, $v(\mathbf q)=10\log_{10}(\Upsilon_{\mathrm{SH}}(\mathbf{q}))$ is a random variable that models the effect of shadowing, and $\omega(\mathbf q)=10\log_{10}(\Upsilon_{\mathrm{ML}}(\mathbf{q}))$ models the effect of multipath fading. Generally, $v(\mathbf q)$ can be modelled as a zero-mean Gaussian random variable with variance $\alpha$ and spatial correlation function
\begin{align}
\varepsilon(h)=\alpha\exp\left(-\frac{h}{\beta}\right), \label{eq:shadowing}
\end{align}
where $h$ is the distance between the pair of data points and $\beta$ is known as the correlation distance of shadowing \cite{ShadowingModel}. In other words, we have
$\mathbb{E}\{v(\mathbf q_i)v(\mathbf q_j)\}=\varepsilon(\|\mathbf{q}_i-\mathbf{q}_j\|)$.
The multipath effect $\omega(\mathbf{q})$ is also modeled as a zero-mean Gaussian random variable, with variance $\sigma^2$ and it spatially de-correlates fast as compared with the shadowing.



\subsection{Spatial Prediction of Channel Gain }
We first assume that the channel modeling parameters in \eqref{eq:Ch}, i.e., $\{n_{\mathrm{PL}},K_{\mathrm{dB}}, \alpha,\beta,\sigma^2\}$, are fixed and known. Then, the estimation of channel gain at location $\mathbf{q}$ is equivalent to estimating the shadowing effect $v(\mathbf{q})$ since the path loss can be computed from the distance to the BS and the multipath fading can be hardly predicted, and hence treated as noise.

Ideally, all data points in the CGM should be used for estimating $\Upsilon_{\mathrm{dB}}(\mathbf{q})$. However, since the channel correlation diminishes with the separation distance and for the purpose of reducing the computational complexity, we assume that only $k$ nearest data collecting locations are included for online channel prediction. Denote the locations of $k$ nearest data point in CGM by  $\mathcal{Q}_k=\{\mathbf q_1,...,\mathbf{q}_k\}$. According to \cite{Malmirchegini12}, the MMSE estimation of $v(\mathbf q)$ is given by
\begin{align}
\tilde{v}(\mathbf q)=\phi_Q^T(\mathbf q)(\mathbf{R}_Q+\sigma^2\mathbf I_{k\times k})^{-1}(\mathbf{y}_Q-K_{\mathrm{dB}}-n_{\mathrm{PL}}\mathbf{h}), \label{eq:shadowEstimation}
\end{align}
where $\phi_Q(\mathbf q)=\left[\varepsilon(\|\mathbf q-\mathbf q_1\|),\cdots,\varepsilon(\|\mathbf q-\mathbf q_k\|) \right]^T$ captures the correlation between the shadowing at the target location $\mathbf{q}$ and that at all the data collecting locations $\mathcal{Q}_k$, $\mathbf{R}_Q$ is the correlation matrix, with the $(i,j)$th component being $[\mathbf{R}_Q]_{ij}=\varepsilon(\|\mathbf{q}_i-\mathbf{q}_j\|)$, $\mathbf I_{k\times k}$ is the identity matrix, $\mathbf{y}_Q=\left[\Upsilon_{\mathrm{dB}}(\mathbf{q}_1),\cdots,\Upsilon_{\mathrm{dB}}(\mathbf{q}_k)\right]^T$ is the measurements stored in the CGM, and $\mathbf h=[-10\log_{10}(\|\mathbf q_{1}\|),\cdots,-10\log_{10}(\|\mathbf q_{k}\|)]^T$ is the distance vector.


The estimated channel gain $\tilde{\Upsilon}_{\mathrm{dB}}(\mathbf{q})$ is the summation of the path loss and the estimated shadowing $\tilde{v}(\mathbf q)$, i.e.,
\begin{align}
\tilde{\Upsilon}_{\mathrm{dB}}(\mathbf{q})=K_{\mathrm{dB}}-10n_{pL}\log_{10}(\|\mathbf q\|)+\tilde{v}(\mathbf q). \label{eq:estCh}
\end{align}

The complexity of the online channel prediction according to \eqref{eq:estCh} mainly attributes to the inversion  of a $k\times k$ matrix in \eqref{eq:shadowEstimation}, which can be reduced by considering less number of neighboring data points, i.e., using a smaller $k$. Besides, the MSE of the estimation in \eqref{eq:estCh} is given by
\begin{align}
\xi_{\mathrm{dB}}(\mathbf q)&=\left(\Upsilon_{\mathrm{dB}}(\mathbf{q})-\tilde{\Upsilon}_{\mathrm{dB}}(\mathbf{q})\right)^2\nonumber\\
&=\alpha+\sigma^2-\phi_Q^T(\mathbf q)(\mathbf{R}_Q+\sigma^2\mathbf I_{k\times k})^{-1}\phi_Q(\mathbf{q}).\label{eq:MSE}
\end{align}
By substituting \eqref{eq:MSE} into \eqref{eq:AMSE} and taking average with respect to the random location $\mathbf q\in\mathcal A$, we can get the AMSE for the CGM constructed by random and grid sampling, respectively, as elaborated below.

\subsection{AMSE of Random CGM}
We start by considering the special case when only the nearest data point is used, i.e., $k=1$. Denote by $\mathbf{q}_1$ the location of the nearest data point stored in CGM and let $d_{\min}=\|\mathbf{q}-\mathbf{q}_1\|$. Then, the correlation matrix is simplified to  $\mathbf{R}_Q=1$ and the estimated shadowing loss in \eqref{eq:shadowEstimation} reduces to
\begin{align}
\tilde{v}(\mathbf q)=\frac{\alpha}{1+\sigma^2}\exp\left(-\frac{d_{\min}}{\beta}\right)v(\mathbf{q}_1), \label{eq:shadowEstimationk1}
\end{align}
where $v(\mathbf{q}_1)=\Upsilon_{\mathrm{dB}}(\mathbf{q}_1)-K_{\mathrm{dB}}+10n_{\mathrm{PL}}\log_{10}(\|\mathbf{q}_1\|)$ is the shadowing loss at $\mathbf{q}_1$. Furthermore, the MMSE in \eqref{eq:MSE} is simplified to
\begin{align}
\xi_{\mathrm{dB}}(\mathbf q)|_{(k=1)}=\alpha+\sigma^2-\frac{\alpha^2}{\alpha+\sigma^2}\exp\left(-\frac{2d_{\min}}{\beta}\right).\label{eq:MSE_k1}
\end{align}
When the data points are distributed randomly, according to the homogeneous PPP with density $\lambda$, the distribution of $d_{\min}$ is given by the probability density function (p.d.f.) $P_{r}(x)$ \cite{StochasticGoe} with
\begin{align}
P_{r}(x)=2\pi\lambda x\exp\left(-\pi\lambda x^2\right). \label{eq:dmin_random}
\end{align}
By substituting \eqref{eq:MSE_k1} and \eqref{eq:dmin_random} into \eqref{eq:AMSE}, we can obtain the AMSE as a function of the density $\lambda$, which is presented in Lemma~\ref{lem:AMSEk1PPP}.
\begin{lemma}\label{lem:AMSEk1PPP}
If the CGM is constructed by data collection at random locations according to the homogeneous PPP with density $\lambda$ and only the nearest data point is used for online channel  prediction, the AMSE is given by
\begin{align}
&AMSE=\mathbb{E}[\xi_{\mathrm{dB}}(\mathbf q)|_{(k=1)}]=\alpha+\sigma^2-\frac{\alpha^2}{\alpha+\sigma^2}\zeta_r(\lambda),\label{eq:AMSEk1PPP}
\end{align}
where
\begin{align}
\zeta_r(\lambda)&=\int_{0}^{\infty}\exp\left(-\frac{2x}{\beta}\right)P_{r}(x)dx\nonumber\\
&=2\pi\lambda\int_{0}^{\infty}x\exp\left(-\pi\lambda x^2-\frac{2x}{\beta}\right)dx\nonumber\\
&=1-\frac{1}{\beta\sqrt{\lambda}}\exp\left(\frac{1}{\pi\lambda\beta^2}\right)\left(1-\Phi\left(\frac{1}{\beta
\sqrt{\pi\lambda}}\right)\right)\nonumber
\end{align}
and $\Phi(x)=\frac{2}{\sqrt{\pi}}\int_{0}^xe^{-t^2}dt$ is the Gauss error function.
\end{lemma}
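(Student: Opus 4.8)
The plan is to proceed directly from the single-point MSE expression \eqref{eq:MSE_k1} together with the nearest-neighbor distance distribution \eqref{eq:dmin_random}, reducing the claim to the evaluation of a single Gaussian-type integral. First I would note that for $k=1$ the MSE $\xi_{\mathrm{dB}}(\mathbf q)|_{(k=1)}$ depends on the target location $\mathbf q$ only through the nearest-neighbor distance $d_{\min}$; consequently, averaging over $\mathbf q\in\mathcal A$ and over the random sample positions collapses, by the PPP statistics encoded in $P_r(x)$, to a single expectation with respect to $d_{\min}$. Since $\alpha$ and $\sigma^2$ are constants, linearity of expectation gives
\begin{align}
AMSE = \alpha+\sigma^2 - \frac{\alpha^2}{\alpha+\sigma^2}\,\mathbb{E}\!\left[\exp\!\left(-\frac{2d_{\min}}{\beta}\right)\right], \nonumber
\end{align}
so the whole problem reduces to showing $\mathbb{E}[\exp(-2d_{\min}/\beta)] = \zeta_r(\lambda)$.

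The core computation is then the integral $\zeta_r(\lambda) = 2\pi\lambda\int_0^\infty x\exp(-\pi\lambda x^2 - 2x/\beta)\,dx$, which I would evaluate by completing the square in the exponent. Setting $a=\pi\lambda$ and $b=2/\beta$, the exponent becomes $-a\left(x+\frac{b}{2a}\right)^2 + \frac{b^2}{4a}$; after the shift $u = x + \frac{b}{2a}$ the integral splits into a pure Gaussian-moment term $\int u\,e^{-au^2}\,du$, which integrates in closed form, and a Gaussian-tail term $\int e^{-au^2}\,du$, which produces the complementary error function. With the rescaling $t=\sqrt a\,u$ the tail integral runs from $\sqrt a\,\frac{b}{2a} = \frac{1}{\beta\sqrt{\pi\lambda}}$, and comparison with $\Phi(x)=\frac{2}{\sqrt\pi}\int_0^x e^{-t^2}\,dt$ supplies the factor $1-\Phi(1/(\beta\sqrt{\pi\lambda}))$; collecting the prefactor $e^{b^2/(4a)}=e^{1/(\pi\lambda\beta^2)}$ then yields the claimed closed form, and substitution back into the displayed identity gives \eqref{eq:AMSEk1PPP}.

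I do not anticipate a genuine obstacle here, as both reductions are elementary. The only step requiring care is the bookkeeping of constants when the two post-substitution pieces are recombined with the leading factor $2\pi\lambda = 2a$: the Gaussian-moment term must cancel the prefactor $e^{b^2/(4a)}$ exactly so as to produce the leading $1$ in $\zeta_r(\lambda)$, with the error-function term carrying the remaining dependence on $\lambda$ and $\beta$. A secondary point worth stating explicitly is the interchange of the spatial average over $\mathcal A$ with the PPP expectation in the first step, which is justified by Tonelli's theorem since the integrand $\exp(-2d_{\min}/\beta)$ is nonnegative and bounded by $1$.
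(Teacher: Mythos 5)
Your proposal is correct and follows essentially the same route as the paper's own proof in Appendix~A: averaging the $k=1$ MSE over the nearest-neighbor distance density $P_r(x)=2\pi\lambda x e^{-\pi\lambda x^2}$, then evaluating $\zeta_r(\lambda)$ by completing the square and splitting the shifted integral into a Gaussian-moment term (which, as you note, cancels the prefactor $e^{1/(\pi\lambda\beta^2)}$ to produce the leading $1$) and a Gaussian-tail term yielding $1-\Phi\left(\frac{1}{\beta\sqrt{\pi\lambda}}\right)$. Your explicit bookkeeping of the constants and the Tonelli justification for interchanging the spatial average with the PPP expectation are both sound, and if anything slightly more careful than the paper's own derivation.
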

\begin{proof}
Please refer to Appendix~\ref{A:PPPk1}.
\end{proof}
When $\lambda\rightarrow 0$, the AMSE in \eqref{eq:AMSEk1PPP} reduces to $(\alpha+\sigma^2)$, which corresponds to the variance of the channel gain prediction using path loss only. When $\lambda\rightarrow\infty$, $\zeta_r(\lambda)\rightarrow 1$ and \eqref{eq:AMSEk1PPP} reduces to $\left(\sigma^2+\frac{\alpha\sigma^2}{\alpha+\sigma^2}\right)$, representing the limit of channel gain spatial prediction. In general, $\zeta_r(\lambda)$ is an increasing function with $\lambda$ and hence the AMSE decreases with the spatial sample density $\lambda$. Besides, the spatial prediction is more effective, i.e., with a larger decaying slope with $\lambda$, when the shadowing variance is larger than the fading power.

Next, we derive the AMSE for general value of $k$, which is rather challenging since it involves the inverse of a random matrix $\mathbf{R}_{Q}$. To gain some insights, we first consider the asymptotical behaviors.  When the CGM density is sufficiently large, i.e., $\lambda\rightarrow\infty$, or the spatial correlation vanishes, i.e., $\beta\rightarrow \infty$, we have $\phi_Q\rightarrow\alpha\mathbf{1}_{k\times 1}$ and $R_Q\rightarrow\mathbf{1}_{k\times k}$, where $\mathbf{1}_{m\times n}$ is a matrix of size $m\times n $ with all the components being 1. Then, the inverse of the matrix can be obtained as
\begin{align}
(\mathbf{R}_Q+\sigma^2\mathbf I_{k\times k})^{-1}&=(\alpha\mathbf{1}_{k\times k}+\sigma^2\mathbf I_{k\times k})^{-1}\nonumber\\
&=\frac{1}{\sigma^2}\mathbf{I}_{k\times k}+\frac{\alpha}{\sigma^2(k\alpha+\sigma^2)}\mathbf{1}_{k\times k}. \label{eq:inverseLim}
\end{align}

By substituting \eqref{eq:inverseLim} and $\Phi_Q$ into \eqref{eq:MSE_k1}, we have
\begin{align}
\lim_{\lambda\rightarrow\infty}\xi_{\mathrm{dB}}(\mathbf q)=\lim_{\beta\rightarrow\infty}\xi_{\mathrm{dB}}(\mathbf q)=\alpha+\sigma^2-\frac{k\alpha^2}{k\alpha+\sigma^2}. \label{eq:MSELim}
\end{align}

Since the limiting behavior of $\xi_{\mathrm{dB}}(\mathbf q)$ no longer depends on the target location, we can obtain AMSE directly. Specifically, for CGM with general density $\lambda$ and prediction algorithm including $k$ neighboring data points, we can obtain the approximate AMSE by assuming that all the $k$ measurements have the minimum distance with the target location, i.e., setting $\phi_Q(\mathbf q)\approx \alpha\exp\left(-\frac{d_{\min}}{\beta}\right), \forall \mathbf q$, and assuming that the measurements are close to each other, i.e., $\mathbf R_Q\approx\mathbf{1}_{k\times k}$. The obtained AMSE is presented in Lemma~\ref{lem:AMSEPPP}.
\begin{lemma}\label{lem:AMSEPPP}
If the CGM is constructed by data collection at random locations according to the homogeneous PPP with density $\lambda$ and $k$ nearest data points are used for online channel prediction, the achieved AMSE is approximated by
\begin{align}
&AMSE\approx \alpha+\sigma^2-\frac{k\alpha^2}{k\alpha+\sigma^2}\zeta_r(\lambda),\label{eq:AMSEkPPP}
\end{align}
where the approximation is tight when $k=1$ or $\lambda\rightarrow\infty$.
\end{lemma}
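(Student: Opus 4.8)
The plan is to reuse the asymptotic computation already carried out for \eqref{eq:MSELim}, which evaluates the quadratic form $\phi_Q^T(\mathbf q)(\mathbf R_Q+\sigma^2\mathbf I_{k\times k})^{-1}\phi_Q(\mathbf q)$ under the limiting structure $\phi_Q\to\alpha\mathbf 1_{k\times 1}$, $\mathbf R_Q\to\alpha\mathbf 1_{k\times k}$ and shows it equals $\frac{k\alpha^2}{k\alpha+\sigma^2}$. For general density I would apply the two stated approximations to the exact per-location MSE \eqref{eq:MSE}: set every entry of the cross-correlation vector to its value at the nearest point, $\phi_Q(\mathbf q)\approx\alpha\exp(-d_{\min}/\beta)\mathbf 1_{k\times 1}$, and keep $\mathbf R_Q\approx\alpha\mathbf 1_{k\times k}$. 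The matrix to be inverted is then exactly the one handled in \eqref{eq:inverseLim}, and since $\phi_Q$ now differs from $\alpha\mathbf 1_{k\times 1}$ only by the common scalar $\exp(-d_{\min}/\beta)$, the quadratic form is just the \eqref{eq:MSELim} value rescaled by $\exp(-2d_{\min}/\beta)$. This gives the conditional MSE $\xi_{\mathrm{dB}}(\mathbf q)\approx\alpha+\sigma^2-\frac{k\alpha^2}{k\alpha+\sigma^2}\exp(-2d_{\min}/\beta)$, which is the $d_{\min}$-dependent refinement of the limiting value in \eqref{eq:MSELim}.

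The second step is to average this expression over the target location, exactly as in Lemma~\ref{lem:AMSEk1PPP}. The only location dependence now enters through $d_{\min}$, whose p.d.f.\ under the homogeneous PPP is $P_r(x)$ in \eqref{eq:dmin_random}, so I would take the expectation term by term to obtain $AMSE\approx\alpha+\sigma^2-\frac{k\alpha^2}{k\alpha+\sigma^2}\,\mathbb E[\exp(-2d_{\min}/\beta)]$. The remaining integral $\int_0^\infty\exp(-2x/\beta)P_r(x)\,dx$ is precisely the one already evaluated in Lemma~\ref{lem:AMSEk1PPP}, namely $\zeta_r(\lambda)$, so no new integration is needed and \eqref{eq:AMSEkPPP} follows immediately.

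Finally I would verify the two tightness claims. For $k=1$ both approximations are exact, since $\phi_Q$ is the genuine scalar $\alpha\exp(-d_{\min}/\beta)$ and $\mathbf R_Q=\alpha$ carries no off-diagonal terms; the formula then collapses to \eqref{eq:AMSEk1PPP}. For $\lambda\rightarrow\infty$ the $k$ nearest samples and the target all concentrate near a single point, so the true $\phi_Q\rightarrow\alpha\mathbf 1_{k\times 1}$ and $\mathbf R_Q\rightarrow\alpha\mathbf 1_{k\times k}$, which is the regime already established around \eqref{eq:MSELim}. The main obstacle is not algebraic but resides in the approximations themselves: for general $k$ and finite $\lambda$ the exact AMSE would require computing $\mathbb E[\phi_Q^T(\mathbf R_Q+\sigma^2\mathbf I_{k\times k})^{-1}\phi_Q]$ over the joint law of the $k$ ordered nearest-neighbour distances together with the inverse of a \emph{random} correlation matrix, which appears intractable in closed form. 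Replacing the distinct inter-sample distances by a single common minimum distance and the correlation matrix by its rank-one all-ones limit is exactly what renders the derivation tractable, at the cost of turning the identity into an approximation that is exact only in the two stated regimes and otherwise serves as a heuristic whose accuracy for intermediate $\lambda$ must be assessed numerically.
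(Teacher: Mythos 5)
Your proposal is correct and follows essentially the same route as the paper, which derives Lemma~\ref{lem:AMSEPPP} in the text immediately preceding it by setting $\phi_Q(\mathbf q)\approx\alpha e^{-d_{\min}/\beta}\mathbf{1}_{k\times 1}$ and $\mathbf R_Q\approx\alpha\mathbf{1}_{k\times k}$, applying the inverse in \eqref{eq:inverseLim} so the quadratic form becomes $\frac{k\alpha^2}{k\alpha+\sigma^2}e^{-2d_{\min}/\beta}$, and then averaging over $P_r(x)$ to recover $\zeta_r(\lambda)$, with tightness exactly at $k=1$ and $\lambda\rightarrow\infty$ as you argue. Your writing of the limiting correlation matrix as $\alpha\mathbf{1}_{k\times k}$ is in fact the consistent form (the paper's $\mathbf R_Q\approx\mathbf{1}_{k\times k}$ is a notational slip, since the entries of $\mathbf R_Q$ tend to $\alpha$), so no gap remains.
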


Similar to \eqref{eq:AMSEk1PPP}, when $\lambda\rightarrow 0$, the sample locations are too far to be correlated, and hence the AMSE is $(\alpha+\sigma^2)$, which is independent of $k$. When $\lambda$ is sufficiently large, the shadowing can be completely known from $k\rightarrow\infty$ samples, and hence the resultant AMSE only contains the fading power $\sigma^2$. In general, for given spatial sample density, the AMSE decays with $k$ with the slope
\begin{align}
\frac{\partial{AMSE}}{\partial k}=-\frac{\alpha^2\sigma^2}{(k\alpha+\sigma^2)^2}\zeta_r(\lambda),
\end{align}
whose absolute value decays with $k$ and increases with $\lambda$. This implies that the marginal improvement of AMSE decreases when $k$ is sufficiently large, and including more data points is effective only when the spatial sample density is large.

\subsection{AMSE of Grid CGM}
When the data collecting locations in CGM are placed in the grid with separation $d$, the distance between a random location $\mathbf q$ with the nearest measurement is distritbuted according to $P_{g}(x)$ with
\begin{align}
P_{g}(x)=\begin{cases}
\frac{2\pi x}{d^2}, & 0\leq x\leq\frac{d}{2}\\
\frac{4x}{d^2}\left(\frac{\pi}{2}-2\arccos\left(\frac{d}{2x}\right)\right), &\frac{d}{2}\leq x\leq\frac{\sqrt{2}d}{2}
\end{cases}\label{eq:dmin_grid}
\end{align}

The derivation of \eqref{eq:dmin_grid} is presented in Appendix~\ref{A:griddmin}. Then, following similar analysis as for random CGM, we can obtain the AMSE for grid CGM with $k=1$, as presented in Lemma~\ref{lem:AMSEK1Grid}.
\begin{lemma}\label{lem:AMSEK1Grid}
If the CGM is constructed by data collected at grid points with separation $d$ and only the nearest data point is used for online channel gain prediction, the achieved AMSE  is given by
\begin{align}
&AMSE=\mathbb{E}[\xi_{\mathrm{dB}}(\mathbf q)|_{(k=1)}]=\alpha+\sigma^2-\frac{\alpha^2}{\alpha+\sigma^2}\zeta_g(d)
\end{align}
where
\begin{align}
\zeta_g(d)&=\int_{0}^{\frac{d}{2}}\frac{2\pi x}{d^2}e^{-\frac{2x}{\beta}}dx
+\int_{\frac{d}{2}}^{\frac{\sqrt{2}d}{2}}\frac{8 x}{d^2}e^{-\frac{2x}{\beta}}\arccos\left(\frac{d}{2x}\right) dx\nonumber\\
&=\left[\frac{\pi\beta^2}{2d^2}\left(1-\left(1+\frac{\sqrt{2}d}{\beta}\right)e^{-\sqrt{2}d/\beta}\right)-2\Psi\left(\frac{d}{\beta}\right)\right]
\nonumber
\end{align}
and $\Psi(x)=\int_{1}^{\sqrt(2)}ue^{-ux}\left(\arccos\left(\frac{1}{u}\right)\right)du\approx (0.2854-0.0725x+0.0108x^2)e^{-x}$.
\end{lemma}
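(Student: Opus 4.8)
The plan is to mirror the approach used for the random CGM in Lemma~\ref{lem:AMSEk1PPP}, now replacing the Poisson nearest-distance law by the grid law $P_g(x)$ of \eqref{eq:dmin_grid} (derived in Appendix~\ref{A:griddmin}). The starting point is the closed form \eqref{eq:MSE_k1} of the $k=1$ prediction error, which depends on the random target location $\mathbf q$ only through the nearest-neighbor distance $d_{\min}$. By the definition of AMSE in \eqref{eq:AMSE}, averaging over $\mathbf q\in\mathcal A$ is the same as averaging over $d_{\min}$, so by linearity of expectation
\begin{align}
AMSE=\alpha+\sigma^2-\frac{\alpha^2}{\alpha+\sigma^2}\,\mathbb{E}\!\left[\exp\!\left(-\frac{2d_{\min}}{\beta}\right)\right],\nonumber
\end{align}
and it only remains to evaluate $\zeta_g(d)=\int_0^{\sqrt{2}d/2}e^{-2x/\beta}P_g(x)\,dx$.

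First I would split this integral according to the two branches of $P_g(x)$ in \eqref{eq:dmin_grid}. Collecting the common factor $\tfrac{2\pi x}{d^2}$ over the whole support $[0,\sqrt{2}d/2]$ gives the elementary integral $\tfrac{2\pi}{d^2}\int_0^{\sqrt{2}d/2}x\,e^{-2x/\beta}\,dx$, which by a single integration by parts produces exactly the term $\tfrac{\pi\beta^2}{2d^2}\big(1-(1+\sqrt{2}d/\beta)e^{-\sqrt{2}d/\beta}\big)$. The only nontrivial piece is the $\arccos$-weighted contribution on $[d/2,\sqrt{2}d/2]$; here I would apply the substitution $u=2x/d$, which maps the limits to $[1,\sqrt{2}]$, sends $\arccos(d/2x)$ to $\arccos(1/u)$, and turns the integral into $2\int_1^{\sqrt{2}}u\,\arccos(1/u)\,e^{-ud/\beta}\,du=2\Psi(d/\beta)$, thereby defining $\Psi$ and subtracting it off to yield the stated $\zeta_g(d)$.

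The main obstacle is that $\Psi(x)$ has no elementary closed form, so the last step is to justify the polynomial-in-$x$ times $e^{-x}$ approximation. I would first pin the leading coefficient exactly: at $x=0$, integration by parts with $\tfrac{d}{du}\arccos(1/u)=1/(u\sqrt{u^2-1})$ gives $\Psi(0)=\int_1^{\sqrt{2}}u\arccos(1/u)\,du=\big[\tfrac{u^2}{2}\arccos(1/u)-\tfrac12\sqrt{u^2-1}\big]_1^{\sqrt{2}}=\tfrac{\pi}{4}-\tfrac12\approx0.2854$, matching the constant term of the stated fit. Writing $\Psi(x)e^{x}=\int_1^{\sqrt{2}}u\,\arccos(1/u)\,e^{-(u-1)x}\,du$, this quantity is smooth and slowly varying in $x$ over the physically relevant range of $d/\beta$, so the remaining two coefficients follow from a standard least-squares fit of $\Psi(x)e^{x}$ by a quadratic, with the exact value $\Psi(0)=\pi/4-1/2$ anchoring the fit near $x=0$. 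Combining the elementary term with $-2\Psi(d/\beta)$ gives the claimed expression for $\zeta_g(d)$, and substituting into the AMSE identity above completes the proof.
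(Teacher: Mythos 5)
Your proposal is correct and follows essentially the same route as the paper: the paper likewise averages the $k=1$ MSE in \eqref{eq:MSE_k1} over the grid nearest-neighbor distance law \eqref{eq:dmin_grid} derived in Appendix~\ref{A:griddmin}, evaluating the elementary piece by integration by parts and reducing the $\arccos$ piece to $\Psi(d/\beta)$ via the substitution $u=2x/d$. As a side benefit, your derivation gives the correct form $\zeta_g(d)=\int_0^{\sqrt{2}d/2}\frac{2\pi x}{d^2}e^{-2x/\beta}\,dx-2\Psi\left(\frac{d}{\beta}\right)$, which matches the closed-form second line of the lemma (whose constant $\Psi(0)=\pi/4-1/2\approx 0.2854$ you verify exactly), whereas the first displayed line of the lemma as printed contains a typo in the upper limit of the first integral and the sign of the $\arccos$ term.
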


Following similar analysis as that for random CGM, we can obtain the same asymptotical MSE in \eqref{eq:MSELim} for grid CGM when $d\rightarrow 0$. Furthermore, by using similar approximation, we have
\begin{lemma}\label{lem:AMSEGrid}
If the CGM is constructed by data collected at grid points with separation $d$  separation  $d$ and $k$ nearest data points are used for online channel gain prediction, the achieved AMSE is given by
\begin{align}
&AMSE\approx \alpha+\sigma^2-\frac{k\alpha^2}{k\alpha+\sigma^2}\zeta_g(d),\label{eq:AMSEkGrid}
\end{align}
where the approximation is tight when $k=1$ or $d\rightarrow 0$.
\end{lemma}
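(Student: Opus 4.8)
The plan is to mirror exactly the argument that established Lemma~\ref{lem:AMSEPPP} for the random CGM, changing only the statistics of the nearest-neighbor distance. Starting from the exact per-location MSE in \eqref{eq:MSE}, I would invoke the same two approximations used for the general-$k$ random case: treat each of the $k$ retained grid points as if it sat at the minimum distance $d_{\min}$ from the target, so that $\phi_Q(\mathbf q)\approx\alpha\exp(-d_{\min}/\beta)\mathbf{1}_{k\times 1}$, and treat the retained points as mutually coincident, so that $\mathbf R_Q$ approaches its all-equal limit $\alpha\mathbf{1}_{k\times k}$ consistent with \eqref{eq:inverseLim}. These are precisely the limits that become exact as the inter-sample spacing shrinks.

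Next I would substitute the closed-form inverse \eqref{eq:inverseLim} for $(\alpha\mathbf{1}_{k\times k}+\sigma^2\mathbf I_{k\times k})^{-1}$ and evaluate the quadratic form $\phi_Q^T(\mathbf R_Q+\sigma^2\mathbf I_{k\times k})^{-1}\phi_Q$. Using the elementary identities $\mathbf{1}_{1\times k}\mathbf I_{k\times k}\mathbf{1}_{k\times 1}=k$ and $\mathbf{1}_{1\times k}\mathbf{1}_{k\times k}\mathbf{1}_{k\times 1}=k^2$, the quadratic form collapses to $\frac{k\alpha^2}{k\alpha+\sigma^2}\exp(-2d_{\min}/\beta)$, so the conditional MSE becomes
\begin{align}
\xi_{\mathrm{dB}}(\mathbf q)\approx\alpha+\sigma^2-\frac{k\alpha^2}{k\alpha+\sigma^2}\exp\left(-\frac{2d_{\min}}{\beta}\right),\nonumber
\end{align}
which recovers the location-independent asymptote \eqref{eq:MSELim} when $d_{\min}\to 0$.

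The final step is to average this expression over $\mathbf q\in\mathcal A$. Since the only random quantity remaining is $d_{\min}$, the averaging reduces to computing $\mathbb E[\exp(-2d_{\min}/\beta)]$ under the grid nearest-neighbor density $P_g(x)$ in \eqref{eq:dmin_grid}. This is exactly the integral $\zeta_g(d)$ already evaluated in the proof of Lemma~\ref{lem:AMSEK1Grid}, so no new integration is required and \eqref{eq:AMSEkGrid} follows at once. For the tightness claims, when $k=1$ both approximations are vacuous, since $\mathbf R_Q$ is the scalar $\alpha$ and $\phi_Q=\alpha\exp(-d_{\min}/\beta)$ hold exactly, so the expression coincides with Lemma~\ref{lem:AMSEK1Grid}; and when $d\to 0$ all $k$ nearest grid points merge onto the target, making $\phi_Q\to\alpha\mathbf{1}_{k\times 1}$ and $\mathbf R_Q\to\alpha\mathbf{1}_{k\times k}$ exact.

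I expect the only genuine obstacle to be conceptual rather than computational. The exact general-$k$ MSE requires inverting the random matrix $\mathbf R_Q$, whose entries depend on the joint configuration of the $k$ nearest grid points around a random $\mathbf q$; this has no tractable closed form. The argument deliberately sidesteps this by collapsing $\mathbf R_Q$ and $\phi_Q$ to their small-spacing limits, and the remaining care is simply to verify that this is the identical collapse used in the PPP derivation, so that the grid result inherits the same functional form as \eqref{eq:AMSEkPPP} with $\zeta_r(\lambda)$ replaced by $\zeta_g(d)$.
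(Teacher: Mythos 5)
Your proposal is correct and takes essentially the same route as the paper, which likewise obtains Lemma~\ref{lem:AMSEGrid} from the Lemma~\ref{lem:AMSEPPP} argument by setting $\phi_Q(\mathbf q)\approx\alpha\exp\left(-d_{\min}/\beta\right)\mathbf{1}_{k\times 1}$, collapsing $\mathbf R_Q$ to its all-equal limit, applying the rank-one inverse, and then averaging $\exp\left(-2d_{\min}/\beta\right)$ over the grid nearest-neighbor density $P_g(x)$, which is exactly the integral $\zeta_g(d)$ already evaluated for Lemma~\ref{lem:AMSEK1Grid}. One minor remark: the quadratic-form value $\frac{k\alpha^2}{k\alpha+\sigma^2}\exp\left(-2d_{\min}/\beta\right)$ you state is the correct one, but it follows from the Sherman--Morrison inverse $\frac{1}{\sigma^2}\mathbf I_{k\times k}-\frac{\alpha}{\sigma^2(k\alpha+\sigma^2)}\mathbf{1}_{k\times k}$, i.e., the plus sign printed in \eqref{eq:inverseLim} is a typo in the paper (with that sign your cited identities would not yield the stated result), so your derivation matches the paper's intended computation rather than its literal display.
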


Comparing \eqref{eq:AMSEkGrid} with \eqref{eq:AMSEkPPP}, it is observed that the channel prediction performance in random CGM and grid CGM have the same limiting performance at the extremely large or small spatial sample density, i.e., with $\lim_{d\rightarrow 0}\zeta_g(d)=1$ and $\lim_{d\rightarrow \infty}\zeta_g(d)=0$. In general, we have $\zeta_g(d)>\zeta_r(\lambda)$ when they have the same effective density, i.e., with $\lambda=1/d^2$.

The analytical results presented in Lemma~\ref{lem:AMSEPPP} and Lemma~\ref{lem:AMSEGrid} may be used to guide the CGM construction and utilization procedures, by determining  the required sample density for offline CKM construction and the number of data points to be used for online channel prediction, so that the desired level of channel prediction accuracy is guaranteed.

\section{CGM-based Channel Prediction with Unknown Parameters and Spatial Variations}\label{sec:estiPara}
In practice, the channel modeling parameters in the path loss model \eqref{eq:Ch} are usually unknown and they vary across different regions \cite{Karttunen2017}. An illustration example is shown in Fig.~~\ref{F:Map}, where Fig.~~\ref{F:Map}(a) shows a city map and Fig.~~\ref{F:Map}(b) shows the corresponding CGM, generated by the commercial ray tracing software Wireless Insite\footnote{\url{https://www.remcom.
com/wireless-insite-em-propagation-software}}. It is clearly noted that the regions bounded by the three red boxes suffer from different levels of path loss, due to the different levels of LoS link blockage. Hence, different channel gain models should be used for channel spatial prediction. To this end, we divide the area into small regions and achieve the spatial prediction in two steps:
\begin{itemize}
\item{Build the local channel gain model based on the data points within each small region, which is either determined by the distance threshold or building borders.  }
\item{For target location within each region, estimate the channel gain based on the local channel model and the data points in proximity of the target location.}
\end{itemize}

 \begin{figure}[htb]
\centering
\begin{subfigure}{0.35\textwidth}
\centering
\includegraphics[width=\textwidth]{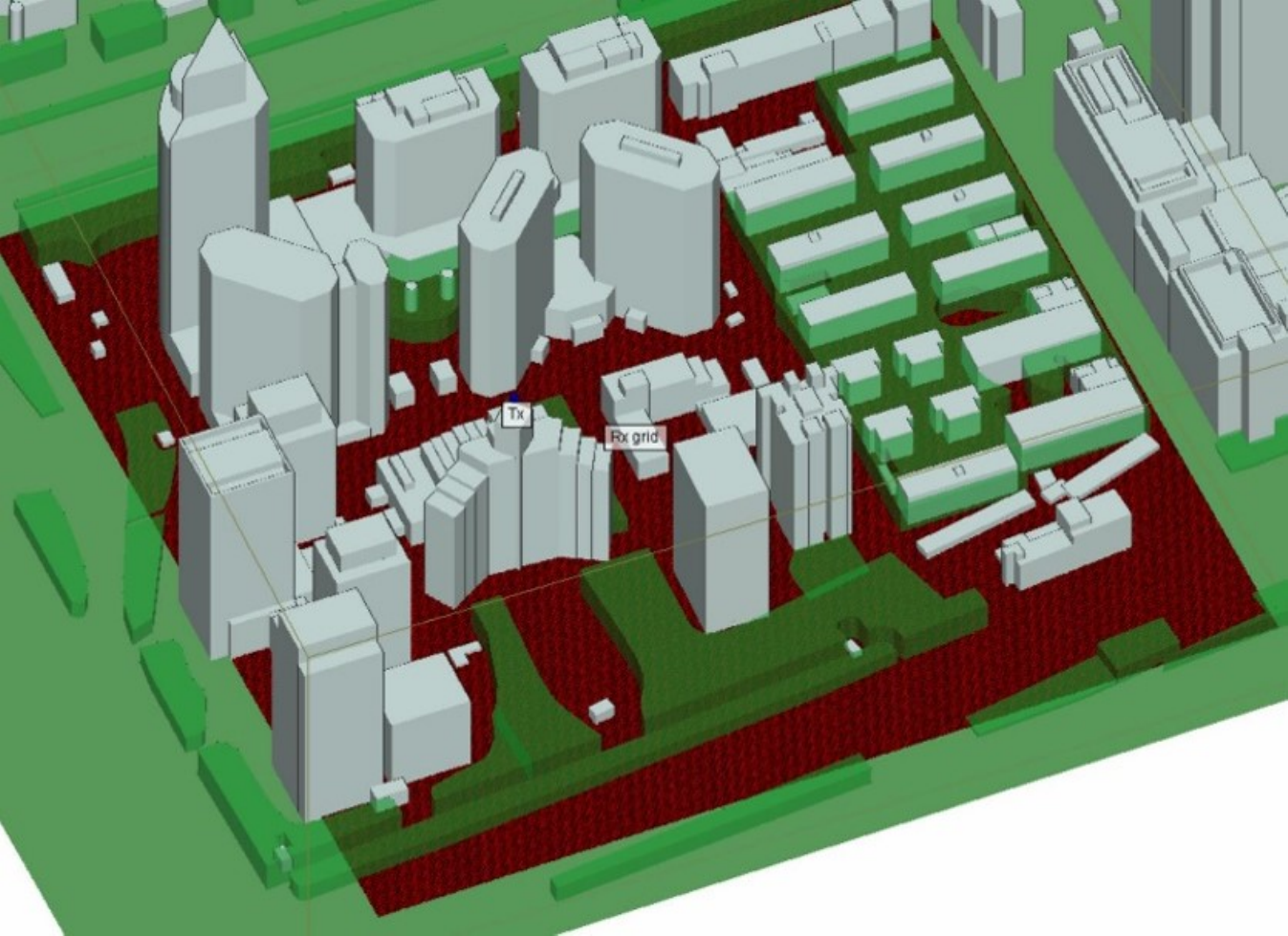}
\caption{City map}
\end{subfigure}
\begin{subfigure}{0.37\textwidth}
\centering
\includegraphics[width=\textwidth]{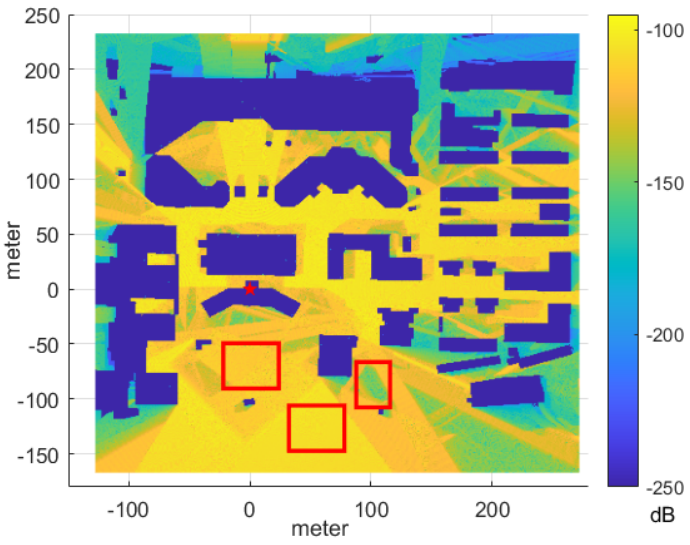}
\caption{Channel gain map}
\end{subfigure}
\caption{An illustration of the channel model variation within a site.}
\label{F:Map}
\end{figure}
In the following, we assume that the region division has been completed and consider the channel modeling parameter estimation for a local region with $N$ measurements in the CGM, located at $\{\mathbf{q}_1,...,\mathbf{q}_N\}$. Furthermore, we analyze the impact of spatial sample density and region size on the channel parameter estimation and the channel prediction.

\subsection{Estimation of Local Channel Modeling Parameters}
Since the effects of shadowing and multipath fading are modeled as the zero-mean random variables, we can obtain the least square (LS) estimation of the path loss intercept and path loss exponent as
\begin{align}
\left[\begin{matrix}\hat{K}_{\mathrm{dB}} & \hat{n}_{\mathrm{PL}}
\end{matrix}\right]=\arg\min\|\mathbf{y}-K_{\mathrm{dB}}-n_{\mathrm{PL}}\mathbf{h}\|^2,
\end{align}
where $\mathbf{y}$ is a vector containing the collected channel gain at all the $N$ sample locations within the  region, $\mathbf{h}=\left[-10\log_{10}(\|\mathbf q_{1}\|),-10\log_{10}(\|\mathbf q_{2}\|),\cdots -10\log_{10}(\|\mathbf q_{N}\|)\right]$ is the vector containing the distance of the $N$ data points to the BS. Let $\mathbf H=[\mathbf{1}_{N\times 1},\mathbf{h}]$, and the LS estimation of the path loss intercept and path loss exponents is given by
\begin{align}
\left[\begin{matrix}\hat{K}_{\mathrm{dB}} \\ \hat{n}_{\mathrm{PL}}
\end{matrix}\right]=\left(\mathbf{H}^T\mathbf{H}\right)^{-1}\mathbf{H}^T\mathbf{y}. \label{eq:LSest}
\end{align}
The covariance matrix of the estimation error vector is
\begin{align}
\mathbf{C}_{LS}&=(\mathbf{H}^T\mathbf{H})^{-1}\mathbf{H}^T\left(\mathbf R_Q+\sigma^2\mathbf{I}_{N\times N}\right)\mathbf{H}(\mathbf{H}^T\mathbf{H})^{-1}, \label{eq:paraMMSE}
\end{align}
where $\mathbf{R}_Q$ is the shadowing correlation matrix for  $N$ data points. The estimation error for  path loss intercept $K_{\mathrm{dB}}$ and path loss exponent $n_{\mathrm{PL}}$ are given by
\begin{align}
\sigma^2_{\hat{K}_{\mathrm{dB}}}=\mathbf C_{LS}(1,1), \sigma^2_{\hat{n}_{\mathrm{PL}}}=\mathbf C_{LS}(2,2). \label{eq:ParaestiError}
\end{align}

In general, the estimation errors in \eqref{eq:ParaestiError} depend on the number of data points and their locations. To gain some insights on the impact of region size and sample density, we assume that the distance of the data collection location to the BS is a random number uniformly distributed within the range $[\delta_{\min},\delta_{\max}]$, i.e., $\|\mathbf{q}\|\sim \mathcal{U}(\delta_{\min},\delta_{\max})$. Then, we can approximate the channel parameter estimation error as a function of the number of data points $N$, and their distribution parameters $(\delta_{\min},\delta_{\max})$, as stated in Lemma~\ref{lem:estiError}.

\begin{lemma}\label{lem:estiError}
Consider a typical region containing $N$ data points in the CGM with random point density $\lambda$ or grid point separation $d$. If the distances of the measurement locations to the BS are uniformly distributed within $[\delta_{\min},\delta_{\max}]$, the LS estimation error of the path loss exponent and path loss intercept in \eqref{eq:paraMMSE} can be approximated by
\begin{align}
\sigma^2_{\hat{K}_{\mathrm{dB}}}&=\frac{\alpha+\sigma^2/c}{N/c}\cdot\frac{\mu^2+\chi}{\chi} \label{eq:KdBError}\\
\sigma^2_{\hat{n}_{\mathrm{PL}}}&=\frac{\alpha+\sigma^2/c}{N/c}\cdot\frac{1}{\chi}\label{eq:nPLerror}
\end{align}
where
\begin{align*}
c&=\begin{cases}\max\{1,\pi\lambda\beta^2\}, & \textnormal{for random CGM}\\
\max\{1,\pi\beta^2/d^2\}, & \textnormal{for grid CGM}\\
\end{cases}\\
\mu&=\frac{10\delta_{\max}\log_{10}(\delta_{\max})-10\delta_{\min}\log_{10}(\delta_{\min})}{\delta_{\max}-\delta_{\min}}-\frac{10}{\ln10}\\
\chi&=\frac{100}{(\ln10)^2}-\frac{100\delta_{\max}\delta_{\min}\log_{10}(\delta_{\max}/\delta_{\min})}{(\delta_{\max}-\delta_{\min})^2}
\end{align*}
\end{lemma}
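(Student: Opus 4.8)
The plan is to start from the exact covariance expression \eqref{eq:paraMMSE}, split the noise covariance into its shadowing and fading parts, and reduce the whole thing to a scalar multiple of $(\mathbf{H}^T\mathbf{H})^{-1}$, whose two diagonal entries then deliver \eqref{eq:KdBError}--\eqref{eq:nPLerror}. Writing $h(\mathbf{q})=-10\log_{10}\|\mathbf{q}\|$ for the entries of the second column of $\mathbf{H}$ and using $\mathbf{R}_Q+\sigma^2\mathbf{I}_{N\times N}$, the covariance decomposes as
\begin{align}
\mathbf{C}_{LS}=(\mathbf{H}^T\mathbf{H})^{-1}\mathbf{H}^T\mathbf{R}_Q\mathbf{H}(\mathbf{H}^T\mathbf{H})^{-1}+\sigma^2(\mathbf{H}^T\mathbf{H})^{-1}. \nonumber
\end{align}
First I would dispose of the purely geometric factor. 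For large $N$, the law of large numbers gives $\frac1N\mathbf{H}^T\mathbf{H}$ converging to the matrix with entries $1$, $\mathbb{E}[h]$, $\mathbb{E}[h]$, $\mathbb{E}[h^2]$, and evaluating the two moments under $\|\mathbf{q}\|\sim\mathcal{U}(\delta_{\min},\delta_{\max})$ produces precisely $\mathbb{E}[h]=-\mu$ and $\mathrm{Var}(h)=\mathbb{E}[h^2]-\mu^2=\chi$, so that $\mathbb{E}[h^2]=\mu^2+\chi$; these are the elementary integrals defining $\mu$ and $\chi$, which I would relegate to direct integration. Inverting the resulting $2\times2$ matrix, whose determinant is exactly $\chi$, yields
\begin{align}
(\mathbf{H}^T\mathbf{H})^{-1}\approx\frac{1}{N\chi}\begin{bmatrix}\mu^2+\chi & \mu \\ \mu & 1\end{bmatrix}, \nonumber
\end{align}
whose diagonal entries $(\mu^2+\chi)/(N\chi)$ and $1/(N\chi)$ are exactly the geometric factors in the claim.

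The crux is the middle term $\mathbf{H}^T\mathbf{R}_Q\mathbf{H}$, which couples the columns of $\mathbf{H}$ through the correlated shadowing. I would treat $\mathbf{R}_Q$ as a discrete convolution operator with kernel $\alpha\exp(-\|\cdot\|/\beta)$, localized on the scale of the correlation distance $\beta$. Because each column of $\mathbf{H}$ varies slowly compared with $\beta$, it is approximately an eigenvector of this operator with eigenvalue equal to the row sum $\sum_j[\mathbf{R}_Q]_{ij}\approx\alpha c$, where $c$ is the effective number of correlated neighbors, obtained by counting the expected number of sample points inside one correlation distance of a given point: $c\approx\pi\lambda\beta^2$ for the PPP and $c\approx\pi\beta^2/d^2$ for the grid. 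When this count falls below one the shadowing is effectively uncorrelated, $\mathbf{R}_Q\approx\alpha\mathbf{I}_{N\times N}$, and the eigenvalue reverts to $\alpha$, which is captured by $c=\max\{1,\cdot\}$. With $\mathbf{H}^T\mathbf{R}_Q\mathbf{H}\approx\alpha c\,\mathbf{H}^T\mathbf{H}$, substituting back collapses the covariance to $\mathbf{C}_{LS}\approx(\alpha c+\sigma^2)(\mathbf{H}^T\mathbf{H})^{-1}$. Reading off the diagonal entries gives $\sigma^2_{\hat{K}_{\mathrm{dB}}}=(\alpha c+\sigma^2)(\mu^2+\chi)/(N\chi)$ and $\sigma^2_{\hat{n}_{\mathrm{PL}}}=(\alpha c+\sigma^2)/(N\chi)$, which rearrange into \eqref{eq:KdBError}--\eqref{eq:nPLerror} upon writing $\alpha c+\sigma^2=c(\alpha+\sigma^2/c)$.

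The main obstacle is justifying the replacement $\mathbf{H}^T\mathbf{R}_Q\mathbf{H}\approx\alpha c\,\mathbf{H}^T\mathbf{H}$ and pinning down $c$. The slowly-varying-column heuristic is only accurate when many samples fall inside a correlation disk, so that the neighbor sum genuinely approximates a convolution integral; this is the regime $\pi\lambda\beta^2\gg1$ in which the $\max$ is inactive, while the opposite regime is handled by the bound $c=1$. The essential point I would need to argue is that the off-diagonal weights of $\mathbf{R}_Q$ rescale the Gram matrix without distorting its \emph{shape}, so that the factor $\alpha c$ commutes out cleanly and cancels against neither but scales both outer $(\mathbf{H}^T\mathbf{H})^{-1}$ factors — which is exactly what the eigenvector approximation buys. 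I would expect the final expressions to be asymptotically exact in $N$ once $c$ is accepted, with the only genuine approximation residing in the identification of the effective correlation count.
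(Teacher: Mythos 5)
Your proposal is correct and lands exactly on \eqref{eq:KdBError}--\eqref{eq:nPLerror}, but it gets there by a genuinely different mechanism than the paper. The paper (Appendix~C) coarse-grains the data: it merges the $c$ points inside each correlation disk into a single effective measurement, so the reduced design matrix $\tilde{\mathbf H}$ has $\tilde N=N/c$ rows, the shadowing across effective points is treated as uncorrelated, and the averaging of independent fading within a group yields the reduced variance $\sigma^2/c$, giving $\mathbf C_{LS}\approx(\alpha+\sigma^2/c)(\tilde{\mathbf H}^T\tilde{\mathbf H})^{-1}$ as in \eqref{eq:reducedCovarance}; the large-$\tilde N$ moment computation for $\mu$ and $\chi$ then matches yours line for line. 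You instead keep all $N$ rows, split \eqref{eq:paraMMSE} into shadowing and fading terms, and collapse the quadratic form via $\mathbf H^T\mathbf R_Q\mathbf H\approx\alpha c\,\mathbf H^T\mathbf H$, using the slowly-varying columns of $\mathbf H$ as approximate eigenvectors of the localized kernel; the identity $\alpha c+\sigma^2=c(\alpha+\sigma^2/c)$ then reconciles the two bookkeepings, since $(\alpha c+\sigma^2)/N=(\alpha+\sigma^2/c)/(N/c)$. Your route makes the regime split behind $c=\max\{1,\cdot\}$ transparent (the eigenvector heuristic needs many samples per correlation disk, and $\mathbf R_Q\approx\alpha\mathbf I_{N\times N}$ covers the other regime), while the paper's grouping picture makes physically evident why the fading variance is reduced to $\sigma^2/c$ --- a feature your version reproduces only implicitly through the $\sigma^2(\mathbf H^T\mathbf H)^{-1}$ term. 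One caveat worth flagging: if you actually evaluate the row sum of the exponential kernel for a PPP you get $\alpha\bigl(1+2\pi\lambda\beta^2\bigr)$, since $\lambda\int_0^\infty e^{-r/\beta}2\pi r\,dr=2\pi\lambda\beta^2$, so your identification of the eigenvalue with $\alpha c$ for $c=\pi\lambda\beta^2$ (the disk count) is a definitional choice matching the paper rather than an output of the convolution argument; this factor-of-two slack is, however, exactly the same step-function-versus-exponential approximation the paper's disk-grouping commits, and both sit within the accuracy the lemma itself claims.
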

\begin{proof}
Please refer to Appendix~\ref{A:ParaEstiError}.
\end{proof}


When the shadowing correlation distance $\beta$ is small, we have $c=1$. Lemma~\ref{lem:estiError} indicates that the estimation error is inversely proportional with the number of data points within the region, which implies that we can improve the parameter estimation accuracy by using a denser CGM. However, when $\beta$ is large, increasing the CGM density will not change the value $N/c$ and hence the improvement over the channel parameter estimation is rather limited.  Besides, the estimation error also depends on the region size, measured by $(\delta_{\max}-\delta_{\min})$, within which the channel parameters are consistent.
\subsection{Estimation of Shadowing and Fading Parameters}
The shadowing and fading parameters can be estimated using LS method or the variogram fitting algorithms discussed in \cite{Cressie85}. This paper adopts the LS estimation for its simplicity and analytical tractability. Specifically, after obtaining the estimated path loss parameters in \eqref{eq:LSest}, the shadowing and fading at each data collecting location can be obtained by subtracting the path loss, i.e.,
\begin{align}
\mathbf{s}=\mathbf{y}-\hat{K}_{\mathrm{dB}}-\hat{n}_{\mathrm{PL}}\mathbf{h},
\end{align}
where $\hat{K}_{\mathrm{dB}}$ and $\hat{n}_{\mathrm{PL}}$ are given in \eqref{eq:LSest}.

Consider a pair of data points at the locations $\mathbf{q}_i$ and $\mathbf{q}_j$, with residual channel gain $\mathbf{s}(i)$ and $\mathbf{s}(j)$, respectively, after subtracting the estimated path loss. The correlation between these residual channel gains is calculated as $\mathbf{s}(i)\mathbf{s}(j)$, which can be used to estimate the shadowing variance and correlation distance in \eqref{eq:shadowing}. Specifically, let $\mathcal{D}=\{d_1,...,d_{|\mathcal D|}\}$ denote the set of all possible distances between a pair sample locations. Further denote by $\mathcal{I}_u=\{(i,j):\|\mathbf{q}_i-\mathbf{q}_j\|=d_u\}$ the set of pairs with distance $d_u$ for $u=1,...,|\mathcal D|$. Then, we can estimate the value of the correlation function at $d_u$ as \begin{align}
\hat{\varepsilon}(d_u)=\frac{1}{|\mathcal{I}_u|}\sum_{(i,j)\in\mathcal{I}_u}\mathbf{s}(i)\mathbf{s}(j).
\end{align}
To reduce complexity and ensure meaningful estimation results, the cardinality of distance vector $|\mathcal{D}|$ is selected such that $\hat{\varepsilon}(d_u)>0$ for $u\leq |\mathcal D|$. Then, the estimation of shadowing parameters can be formulated as
\begin{align}
[\hat{\alpha},\hat{\beta}]=\arg\min\sum_{d_u\in\mathcal{D}}|\mathcal{I}_u|\left(\ln(\alpha e^{-d_u/\beta})-\ln(\hat{\varepsilon}(d_u))\right), \label{eq:alphaEsti}
\end{align}
where $|\mathcal{I}_u|$ is set as the weight measuring the reliability of the experimental value $\hat{\varepsilon}(d_u)$. Problem \eqref{eq:alphaEsti} can be solved easily with the solution given by \cite{Malmirchegini12}
\begin{align}
\left[\begin{matrix}\ln(\hat{\alpha})\\
\frac{1}{\beta}
\end{matrix}\right]=\left(\mathbf{M}^T\mathbf{W}\mathbf{M}\right)^{-1}\mathbf{M}^T\mathbf{W}
\left[\begin{matrix}\ln(\hat{\varepsilon}(d_1)) \\ \vdots \\ \ln(\hat{\varepsilon}(d_{|\mathcal{D}|}))\end{matrix}\right], \label{eq:estShadow}
\end{align}
where $\mathbf{M}=\left[\begin{matrix}1, & \cdots & 1\\ -d_1 & \cdots &-d_{|\mathcal D|}\end{matrix}\right]^T$, $\mathbf{W}=\mathrm{diag}(|\mathcal{I}_1|,\cdots,|\mathcal{I}_{\mathcal{D}}|)$

With $\hat{\alpha}$ obtained in \eqref{eq:estShadow}, the estimated multipath fading variance is given by
\begin{align}
\hat{\sigma}^2=\max\{\mathbf{s}^T\mathbf{s}-\hat{\alpha},0\}. \label{eq:estiSigma}
\end{align}

\subsection{AMSE of Spatial Channel Prediction}
Based on the estimated channel modeling parameters  within the local region, we can perform the spatial prediction of the channel gain within the region as in \eqref{eq:estCh}, by substituting the corresponding estimated channel parameters $\{\hat{n}_{\mathrm{PL}},\hat{K}_{\mathrm{dB}},\hat{\alpha},\hat{\beta},\hat{\sigma}^2\}$. Since the AMSE of channel gain spatial prediction is not sensitive to the estimation error of the correlation distance and shadowing power. Hence, we focus on analyzing the impact of path loss estimation error on the AMSE.

Under the special case when the correlation distance $\beta$ is smaller than the separation between samples, the channel gain estimation reduces to the estimation of path loss, and hence the AMSE is proportional to the path loss parameter estimation error, given in Lemma~\ref{lem:estiError}. Specifically, the channel gain estimation at $\mathbf{q}$ is given by
\begin{align}
\tilde{\Upsilon}_{\mathrm{dB}}(\mathbf q)=\hat{K}_{\mathrm{dB}}-10\hat{n}_{\mathrm{PL}}\log_{10}(\|\mathbf{q}\|))
\end{align}
The MSE of the path loss estimation is given by
\begin{align}
\xi_{\mathrm{dB}}(\mathbf{q})&=\mathbb{E}[(\tilde{\Upsilon}_{\mathrm{dB}}-{\Upsilon}_{\mathrm{dB}})^2]\nonumber\\
&=\alpha+\sigma^2+\sigma_{\hat{K}_{\mathrm{dB}}}^2+(10\log_{10}(\|\mathbf{q}\|)))^2\sigma_{\hat{n}_{\mathrm{PL}}}^2\nonumber\\
&-10\log_{10}(\|\mathbf{q}\|))(\mathbf{C}_{\mathrm{LS}}(2,1)+\mathbf{C}_{\mathrm{LS}}(1,2))
\end{align}
where $\sigma_{\hat{K}_{\mathrm{dB}}}^2$ and $\sigma_{\hat{n}_{\mathrm{PL}}}^2$ are given by \eqref{eq:KdBError} and \eqref{eq:nPLerror}, respectively. Further assuming $\|\mathbf{q}\|\sim \mathcal{U}(\delta_{\min},\delta_{\max})$, we get the AMSE as
\begin{align}
AMSE=\mathbb{E}[\xi_{\mathrm{dB}}(\mathbf{q})]=(\alpha+\sigma^2)\left(1+\frac{2}{N}\right).\label{eq:AMSElowBeta}
\end{align}
which decreases monotonically with the number of data points $N$ within the region.

On the other hand, when $\beta$ is large, the path loss at the target location is calculated from the estimated parameters and the shadowing is estimated by the shadowing experience by the neighboring measurements within the same region. In this case, the AMSE does not necessarily increase with the parameter estimation error, especially when the region is small. This is because the correlated shadowing loss can also be counted into the path loss intercept $\hat{K}_{\mathrm{dB}}$, without affecting the accuracy of channel gain prediction. Hence, deriving the AMSE is challenging. For most of the practical scenarios, we can approximate the AMSE by the case with known channel parameters, presented in Lemma~\ref{lem:AMSEk1PPP} and Lemma~\ref{lem:AMSEPPP}.

\section{Numerical Results}\label{sec:numerical}
The analysis presented in the preceding sections is  verified by the numerical results in this section. We consider the CGM construction in a $D\times D$ square area. Unless otherwise stated, the channel gain is generated according to the model in \eqref{eq:Ch} with channel parameters $n_{\mathrm{PL}}=2.2$, $K_{\mathrm{dB}}=-80$, $\alpha=8$, $\beta=30$ m and $\sigma^2=2$.

First, we verify the analytical AMSE expressions presented in Lemma~\ref{lem:AMSEk1PPP} and Lemma~\ref{lem:AMSEK1Grid} for $k=1$ and various CGM densities in Fig.~\ref{F:AMSEk1}. It is observed that the analysis matches very well with the simulation results, even though an approximation is adopted to obtain the closed-form expression for $\zeta_g(d)$ in Lemma~\ref{lem:AMSEK1Grid}. Besides, the AMSE monotonically decreases with the CGM density. The price paid for denser CGM is the higher construction and storage cost. With the same density, grid CGM slightly outperforms the random CGM since the distance between the random target locations and measurements are guaranteed to be no greater than $d/2$.

\begin{figure}[htb]
\centering
\includegraphics[width=0.37\textwidth]{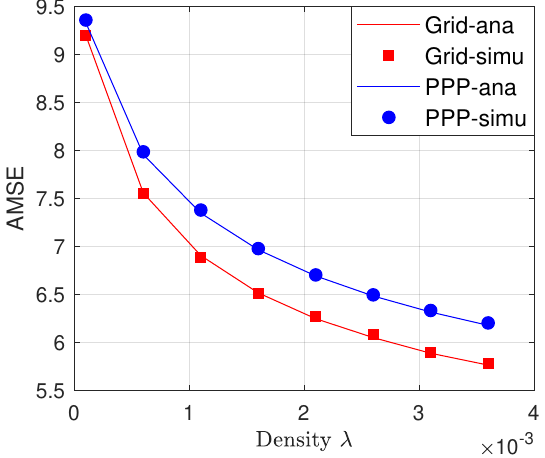}
\caption{The AMSE of channel gain prediction with various CGM densities. For fair comparison, the separation distance in Grid CGM is set as $d=1/\sqrt{\lambda}$.}
\label{F:AMSEk1}
\end{figure}

Under the fixed CGM density, the prediction accuracy can be improved by considering more neighboring data points in CGM for online channel prediction, i.e., with a larger $k$, as indicated by Lemma~\ref{lem:AMSEPPP} and Lemma~\ref{lem:AMSEGrid}. The analytical  results in \eqref{eq:AMSEkPPP}  and \eqref{eq:AMSEkGrid} are verified in Fig.~\ref{F:AMSEk}(a) and Fig.~\ref{F:AMSEk}(b), respectively. As expected, the analytical results are tight when $k=1$ and rather accurate when CGM density is large. It is observed that the nearest data point provides the most information for channel prediction at the target location. With further increase of $k$, the performance improvement is limited, and it becomes almost flat far before $k$ reaches the number of neighboring measurements within the correlation distance, i.e., $\pi\beta^2\lambda$ in random CGM and $\left(\frac{\beta}{d}\right)^2$ in grid CGM. This justifies for the low complexity online channel estimation algorithms that only utilize the limited number of measurements in proximity.

\begin{figure*}[htb]
\centering
\begin{subfigure}{0.35\textwidth}
\centering
\includegraphics[width=\textwidth]{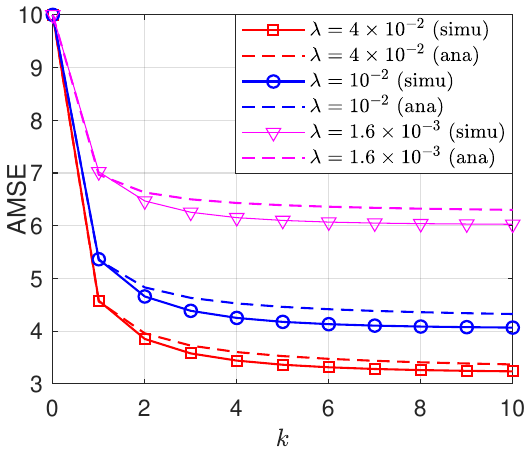}
\caption{Random CGM}
\end{subfigure}
\begin{subfigure}{0.35\textwidth}
\centering
\includegraphics[width=\textwidth]{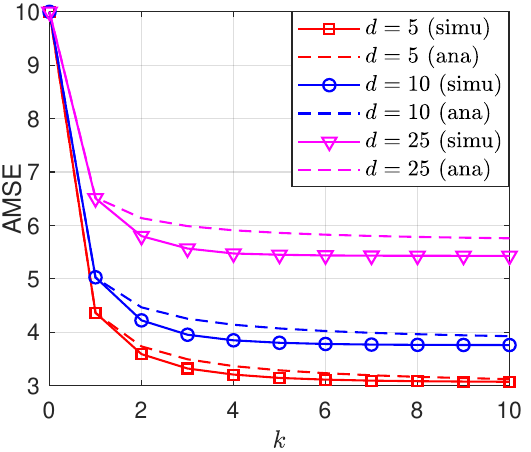}
\caption{Grid CGM}
\end{subfigure}
\caption{The AMSE of channel gain prediction using various number of data points.}
\label{F:AMSEk}
\end{figure*}

Next, we examine the accuracy of the path loss parameter estimation for the scenario with unknown channel modeling parameters. Assume that the channel parameters within a square area of side length $\bar{D}$ is constant, and the border lines of the area is known, so that all the channel gain measurements within that region can be used for channel parameter estimation. Fig.~\ref{F:paraEsti} compares the analytical estimation error of $n_{\mathrm{PL}}$ and $K_{\mathrm{dB}}$, given in \eqref{eq:KdBError} and $\eqref{eq:nPLerror}$, respectively, with the simulation results. The estimation error decreases with the region size within which the channel parameters remain constant. When the shadowing correlation distance is small, i.e., $\beta=1$, the estimation error can be reduced by increasing the map density, as shown in Fig.~\ref{F:paraEsti}(a) and Fig.~\ref{F:paraEsti}(b). However, when $\beta=30$, the reduction of the estimation error with the increase of the data collection density is negligible, as shown in Fig.~\ref{F:paraEsti}(c) and Fig.~\ref{F:paraEsti}(d). The simulation results are consistent with our expectation, as revealed from the analytical results. Besides, similar parameter estimation accuracy is observed for random and grid CGM when they have the same density.
\begin{figure*}[htb]
\centering
\begin{subfigure}{0.33\textwidth}
\centering
\includegraphics[width=\textwidth]{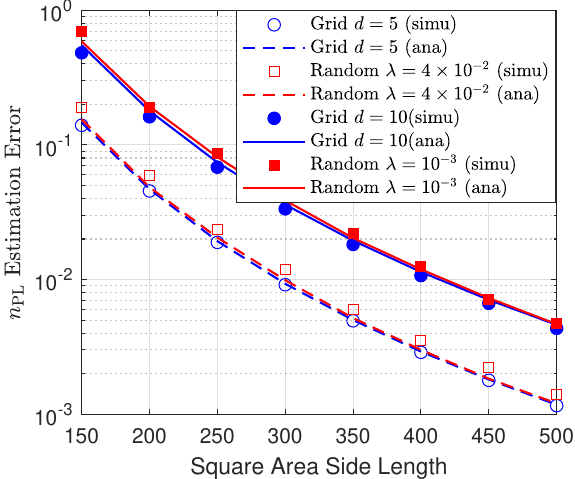}
\caption{$\beta=1$}
\end{subfigure}
\begin{subfigure}{0.33\textwidth}
\centering
\includegraphics[width=\textwidth]{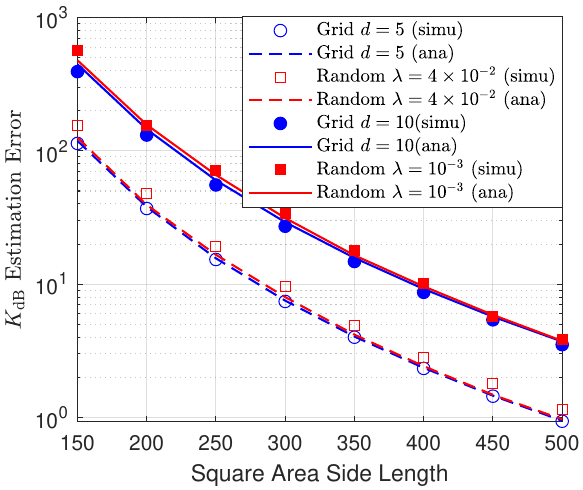}
\caption{$\beta=1$}
\end{subfigure}
\begin{subfigure}{0.33\textwidth}
\centering
\includegraphics[width=\textwidth]{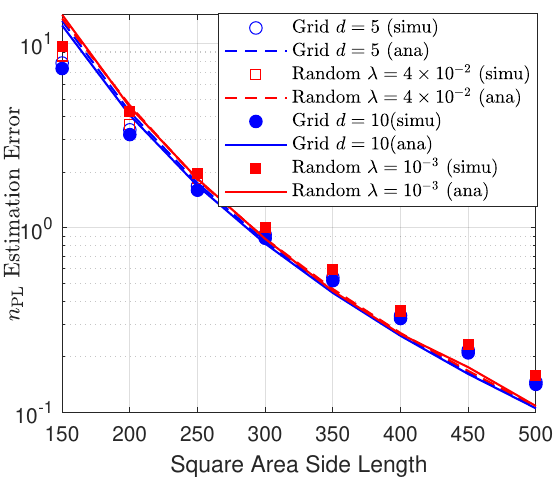}
\caption{$\beta=30$}
\end{subfigure}
\begin{subfigure}{0.33\textwidth}
\centering
\includegraphics[width=\textwidth]{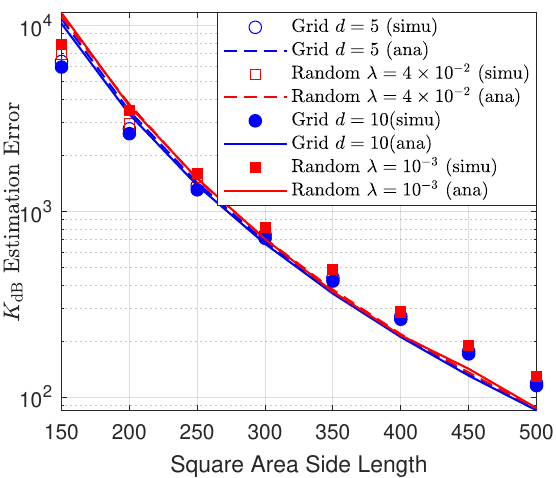}
\caption{$\beta=30$}
\end{subfigure}
\caption{The estimation error of channel parameters $n_{\mathrm{PL}}$ and $K_{\mathrm{dB}}$, with $\alpha=8$ and $\sigma^2=2$.}
\label{F:paraEsti}
\end{figure*}

Further consider the AMSE of channel prediction with the estimated parameters within a local model-consistent region. When $\beta$ is small, the AMSE is mainly caused by the error in path loss parameters and it changes with the number of collected data points within the region. Fig.~\ref{F:AMSE_estiPara}(a) shows that the simulation results match well with the analytical expression shown in \eqref{eq:AMSElowBeta}. The gap between the AMSE under estimated channel parameters and that under known parameters vanishes when the number of measurements within the region is large. Fig.~\ref{F:AMSE_estiPara}(b) shows the AMSE of channel prediction when $\beta=30$ in a small region with $N=20$ measurement samples.  When the region is small, the correlated shadowing may be considered as part of the path loss. This explains why the AMSE of the channel prediction with estimated parameters is lower than that of the prediction with true channel parameters when $k\leq 2$. When $k$ gets large, the shadowing loss can be properly calculated and hence the AMSE performance converges. The performance degradation caused by parameter estimation error is almost negligible, and hence the performance can still be approximated by the analytical results given in Lemma~\ref{lem:AMSEGrid}.

\begin{figure}[htb]
\centering
\begin{subfigure}{0.36\textwidth}
\centering
\includegraphics[width=\textwidth]{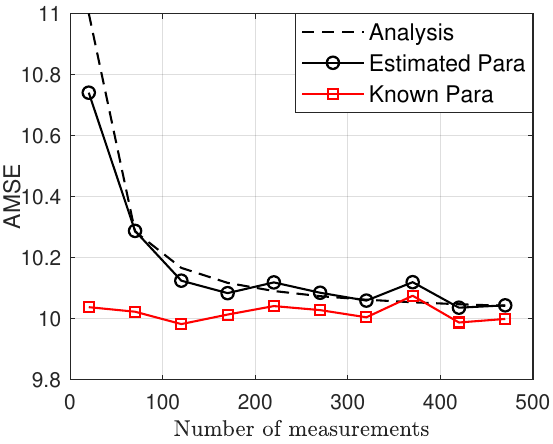}
\caption{$\beta=1,k=0$}
\end{subfigure}
\begin{subfigure}{0.36\textwidth}
\centering
\includegraphics[width=\textwidth]{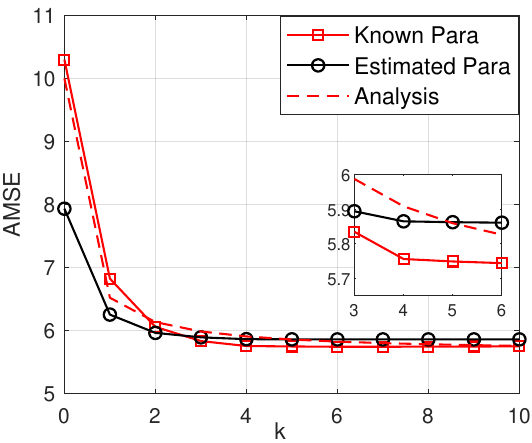}
\caption{$\beta=30,N=20$}
\end{subfigure}
\caption{The AMSE of channel prediction within the model-consistent region based on grid CGM with $d=25$.}
\label{F:AMSE_estiPara}
\end{figure}

In practical environment, finding the correct channel gain model becomes even more challenging. We consider a case study by reconstructing the CGM shown in Fig.~\ref{F:Map}(b) via the grid samples shown in Fig.~\ref{F:MapSample}(a) with $d=20$m, where the building areas are excluded. For simplicity, the whole area is only divided into two regions based on whether the LoS link is blocked by buildings or not. Following the procedures in Section~\ref{sec:estiPara}, the channel parameters for each region are estimated using the LS methods. For LoS region, the estimated channel parameters are $\hat{n}_{\mathrm{PL}}=1.3442$, $\hat K_{\mathrm{dB}}=-77.9267$, $\hat\alpha=0$, $\hat\beta=0$ and $\hat{\sigma}^2=9.4255$. For NLoS region, the estimated channel parameters are $\hat{n}_{\mathrm{PL}}=4.92$, $\hat K_{\mathrm{dB}}=-31.6323$, $\hat\alpha=339.5941$, $\hat\beta=33.9911$ and $\hat{\sigma}^2=0$. For LoS region, only the path loss need to be calculated and hence the neighboring measurements are not used for online channel prediction, i.e., with $k=0$. For NLoS region, based on the analytical results, $k=3$ is selected for balancing the accuracy and complexity for online channel prediction. The reconstructed CGM based on the estimated parameters is shown in Fig.~\ref{F:MapSample}(b) with the AMSE being 109.9952.
\begin{figure}[htb]
\centering
\begin{subfigure}{0.36\textwidth}
\centering
\includegraphics[width=\textwidth]{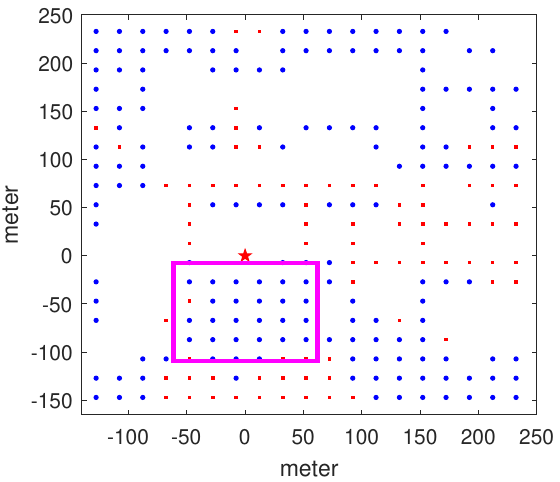}
\caption{Sample Locations}
\end{subfigure}
\begin{subfigure}{0.40\textwidth}
\centering
\includegraphics[width=\textwidth]{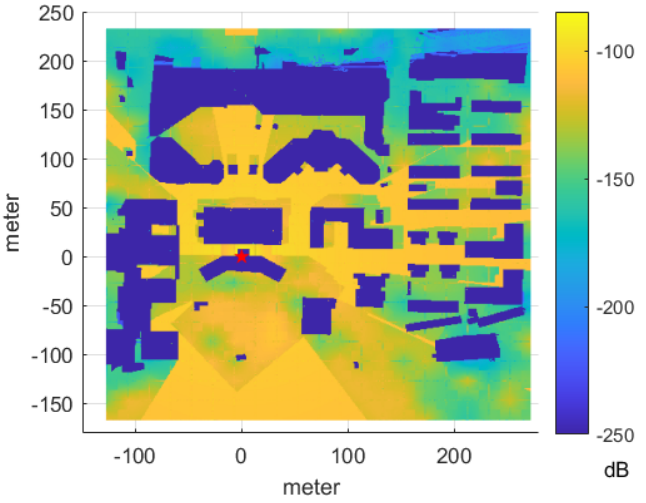}
\caption{Reconstructed CGM}
\end{subfigure}
\caption{The AMSE of channel prediction based on grid CGM with $d=20$ m with region division based on LoS link status.}
\label{F:MapSample}
\end{figure}

Further consider the channel prediction in a small region highlighted in Fig.~\ref{F:MapSample}, which contains 27 measurement samples. The estimated channel model from the samples within this region has the parameters $\hat{n}_{\mathrm{PL}}=0$, $\hat K_{\mathrm{dB}}=-120.8396$, $\hat\alpha=148.0263$, $\hat\beta=7.5880$ and $\hat{\sigma}^2=0$, and the AMSE is  67.4203, which is much smaller than the AMSE if only the LoS/NLoS regions are considered. This justifies the necessity for using different models for different regions. However, if we further divide the this region into even smaller regions, as shown in Fig.~\ref{F:SamllRegion}(a), the AMSE increases since the number of data points within each region reduces. From Fig.~\ref{F:SamllRegion}(b), it is observed that if this region is further divided into two regions (separated by the $y$-axis in Fig.~\ref{F:SamllRegion}(a)), the AMSE slightly increases to 71.2606. If this region is divided into 4 small regions, the average AMSE increases 91.1041 and it varies in different sub-regions.

\begin{figure}[htb]
\centering
\begin{subfigure}{0.36\textwidth}
\centering
\includegraphics[width=\textwidth]{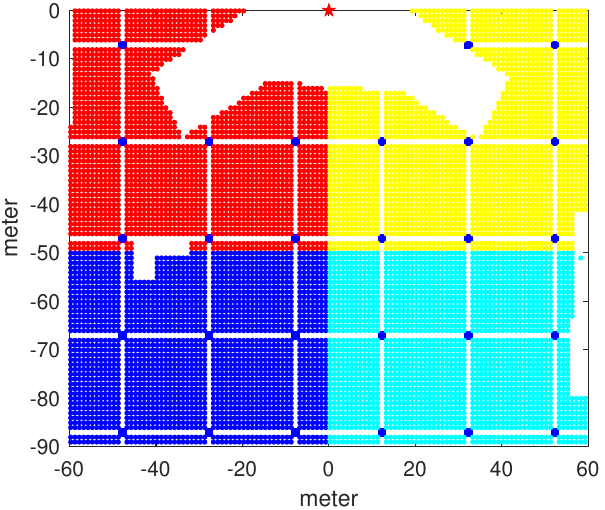}
\caption{Sample Locations}
\end{subfigure}
\begin{subfigure}{0.36\textwidth}
\centering
\includegraphics[width=\textwidth]{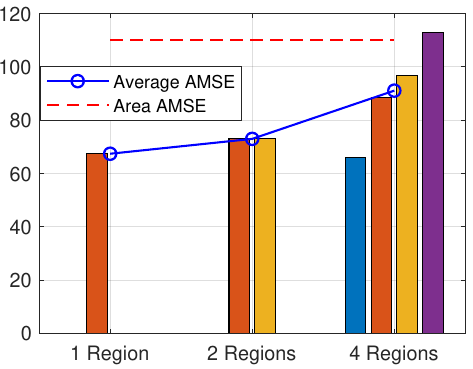}
\caption{Reconstructed CGM}
\end{subfigure}
\caption{The AMSE of channel prediction within small regions.}
\label{F:SamllRegion}
\end{figure}

\section{Conclusion}\label{sec:con}
This paper investigates the CGM construction and utilization problems towards environment-aware communications. For both the random and grid distribution of data collection locations, we derive the AMSE of channel prediction as functions of the data collection density and the number of data points used for online  prediction. To model
the spatial variation of the wireless environment, we divide the area into small regions and estimate the local channel modeling parameters based on data points within the region. The estimation errors of the path loss parameters are derived as functions of the number of samples within the region. In general, the channel prediction error reduces with the sample density and the number of measurements used for online channel prediction, so does the AMSE. Some important results from this study are summarized below:
\begin{itemize}
\item{Given the channel modeling parameters, the number of measurements need to be used for online channel estimation is far below the total number of samples within the shadowing distance, i.e., a small $k$ is sufficient in most of the scenarios. }
\item{When the correlation is strong, increasing the sample density within the region does not necessarily lead to more accurate estimation of the channel parameters, but it improves the channel prediction performance.  }
\item{For small $k$ and large $\beta$, a larger parameter estimation error may not lead to larger channel prediction error since it is unnecessary to distinguish the correlated shadowing and path loss intercept. }
\end{itemize}

Through this study, we have noticed that the proper region division is essential for model-based channel prediction. In the future, the physical map assisted region division will be considered. Besides, the data driven CGM construction is also promising in the complex urban environment, since it does not rely on the accuracy of channel model.

\appendices
\section{Proof of Lemma~\ref{lem:AMSEk1PPP}}\label{A:PPPk1}
Since the MSE of the estimation is only related with the distance between the target location $\mathbf q$ and its nearest data point, we can obtain the average MSE by taking the average with respect to the distribution of $d_{\min}$ in \eqref{eq:dmin_random}, which renders
\begin{align}
\mathbb{E}[\xi_{\mathrm{dB}}(\mathbf q)|_{(k=1)}]&=\alpha+\sigma^2-\frac{\alpha^2}{\alpha+\sigma^2}\int_{0}^{\infty}e^{-\frac{2x}{\beta}}2\pi\lambda x e^{-\pi\lambda x^2} dx\nonumber\\
&=\alpha+\sigma^2-\frac{2\pi\lambda\alpha^2}{\alpha+\sigma^2}\underbrace{\int_{0}^{\infty}x e^{-\pi\lambda x^2-\frac{2x}{\beta}}dx}_{\Xi(\lambda)}.
\end{align}
To further simplify the above equation, we have
\begin{align}
&\Xi(\lambda)=\int_{0}^{\infty}x\exp\left(-\pi\lambda\left(x+\frac{1}{\pi\lambda\beta}\right)^2+\frac{1}{\pi\lambda\beta^2}\right)dx\nonumber\\
&=\exp\left(\frac{1}{\pi\lambda\beta^2}\right)\left(\int_{\frac{1}{\pi\lambda\beta}}^{\infty}xe^{-\pi\lambda x^2}dx
-\frac{1}{\pi\lambda\beta}\int_{\frac{1}{\pi\lambda\beta}}^{\infty}e^{-\pi\lambda x^2}dx\right)\nonumber\\
&\overset{(a)}{=}\frac{1}{2\pi\lambda}\left(1-\frac{1}{\beta\sqrt{\lambda}}\exp\left(\frac{1}{\pi\lambda\beta^2}\right)\right)\left(1-\Phi\left(\frac{1}{\beta
\sqrt{\pi\lambda}}\right)\right)
\end{align}
where $(a)$ follows from the integration formulas in \cite{InteFormula}.
\section{Derivation of $d_{\mathrm{min}}$ Distribution in Grid CGM}\label{A:griddmin}
As shown in Fig.~\ref{F:Griddmin}(a), the middle data point is the closest neighbor for all the target locations within the red box. Hence, finding the distribution of $d_{\min}$ in the grid CGM, i.e., $P_{g}(x)=\Pr(d_{\min}=x)$, is equivalent to finding the probability that a random location within the box has a distance $x$ to the center.
\begin{figure}[htb]
\centering
\begin{subfigure}{0.20\textwidth}
\centering
\includegraphics[width=0.8\textwidth]{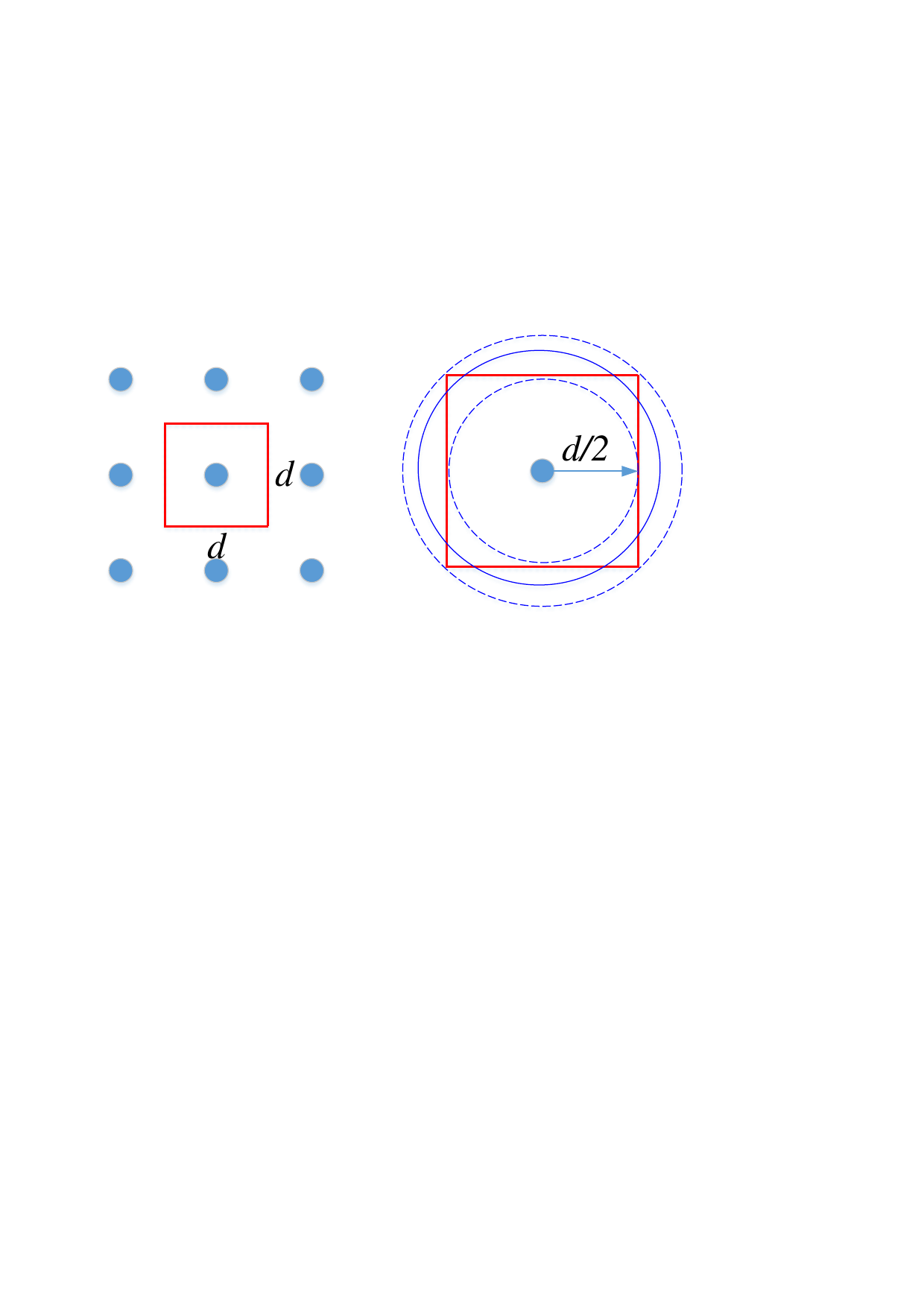}
\caption{}
\end{subfigure}
\begin{subfigure}{0.20\textwidth}
\centering
\includegraphics[width=0.8\textwidth]{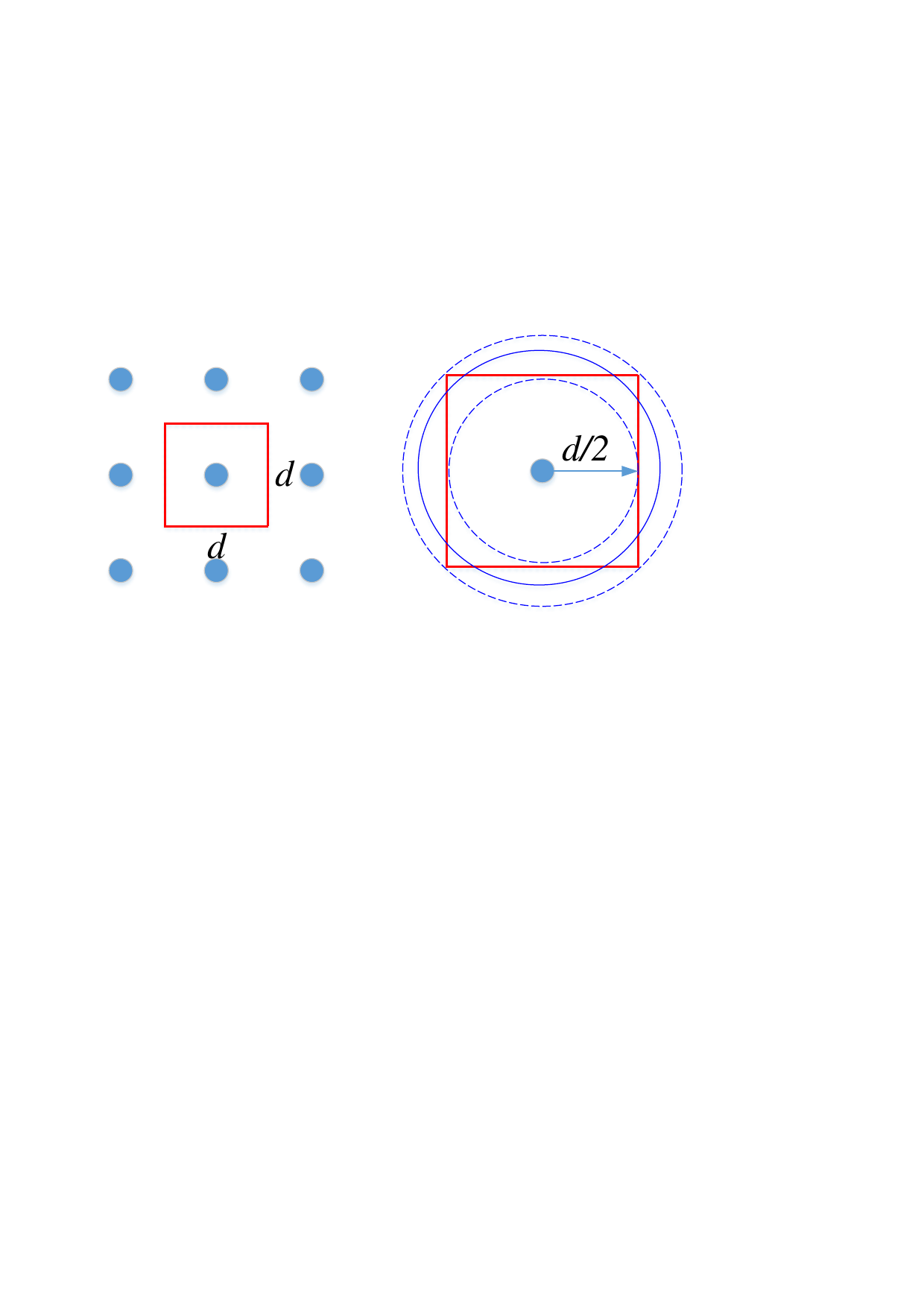}
\caption{}
\end{subfigure}
\caption{The distribution of $d_{\min}$ in Grid CGM.}
\label{F:Griddmin}
\end{figure}

From Fig.~\ref{F:Griddmin}(b), for $x\leq\frac{d}{2}$, the random location has distance $x$ with the center if it falls into the ring with inner and outer radius $x$ and $x+\epsilon$, respectively, where $\epsilon\rightarrow 0$. Hence, we have
 \begin{align}
 P_{g}(x)=\lim_{\epsilon\rightarrow 0}\frac{\pi(x+\epsilon)^2-\pi x^2}{\epsilon d^2}=\frac{2\pi x}{d^2}. \label{eq:Griddmin1}
 \end{align}

Similarly, for $\frac{d}{2}<x<\frac{\sqrt{2}d}{2}$, the random location has distance $x$ from the center if it falls into the overlapped area of the ring from $x$ and $x+\epsilon$ and the red square. Hence, we have
\begin{align}
P_{g}(x)&=\lim_{\epsilon\rightarrow 0}\frac{4x\epsilon\left(\frac{\pi}{2}-2\arccos(\frac{d}{2x})\right)}{\epsilon d^2}\nonumber\\
&=\frac{4x}{d^2}\left(\frac{\pi}{2}-2\arccos\left(\frac{d}{2x}\right)\right).
\end{align}

\section{Proof of Lemma~\ref{lem:estiError}}\label{A:ParaEstiError}
The channel parameter estimation error is given by the diagonal elements of the error covariance matrix $C_{LS}$ in \eqref{eq:paraMMSE}. Deriving the explicit expression of $C_{LS}$ is challenging due to the inverse of a $N\times N$ matrix $(R_{Q}+\sigma^2\mathbf{I}_{N\times N})$. The shadowing correlation among the data points introduces the bias error on the path loss parameter estimations. To get some insights on the parameter estimation error with CGM construction and utilization, we propose to group the data points that are within the correlation distance $\beta$ and treat each group as an effective data point, as shown in Fig.~\ref{F:PointGroup}(a). The shadowing correlation among the effective data points shown in Fig.~\ref{F:PointGroup}(b) can be neglected, i.e., with $R_{Q}=\alpha\mathbf{I}_{N\times N}$.

\begin{figure}[htb]
\centering
\begin{subfigure}{0.23\textwidth}
\centering
\includegraphics[width=\textwidth]{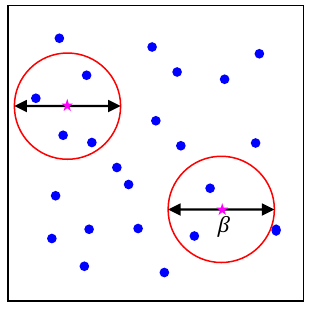}
\caption{}
\end{subfigure}
\begin{subfigure}{0.23\textwidth}
\centering
\includegraphics[width=\textwidth]{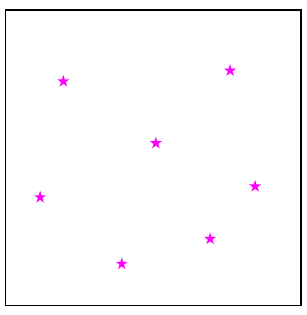}
\caption{}
\end{subfigure}
\caption{An illustration diagram for deriving the parameter estimation error.}
\label{F:PointGroup}
\end{figure}

Consider the random CGM with density $\lambda$ and grid CGM with separation $d$. The above procedures will lead to a sparse map by a factor $\pi\beta^2$, i.e., on average the number of measurements grouped together is
\begin{align}
c=\begin{cases}\max\{1,\pi\lambda\beta^2\}, & \textnormal{ Random CGM}\\
\max\{1,\pi\beta^2/d^2\}, & \textnormal{ Grid CGM}\\
\end{cases}
\end{align}

Since the multipath fading at data collection locations is assumed to be independent, the grouped data points have an effective multipath  variance $\sigma^2/c_r$. Now, we can obtain an approximation on the parameter estimation error by considering the estimation over the effective data points after grouping, which renders
\begin{align}
&C_{LS}\approx (\alpha+\sigma^2/c)(\tilde{H}^T\tilde{H})^{-1} \nonumber\\
&=\left(\alpha+\frac{\sigma^2}{c}\right)\left[\begin{matrix}\tilde{N} & -10\log_{10}\prod_{i=1}^{\tilde{N}}d_i\\
-10\log_{10}\prod_{i=1}^{\tilde{N}}d_i & \sum_{n=1}^{\tilde{N}}(10\log_{10}d_i)^2\end{matrix}\right]^{-1},\label{eq:reducedCovarance}
\end{align}
where $\tilde{H}$ contains the distance information of the effective data points, $\tilde{N}=N/c$ is the number of effective data points, $d_i=\|\mathbf q_i\|$ is the distance between the $i$th effective data location to the BS at the origin. We have assumed that $d_i$ is a random variable uniformly distributed within the range $[\delta_{\min},\delta_{\max}]$, then its value in dB scale, denoted by $y_i=10\log_{10}d_i$, is distributed according to
\begin{align}
f(y)=\Pr(y_i=y)=\frac{\ln(10)}{(\delta_{\max}-\delta_{\min})}10^{\left(\frac{y}{10}-1\right)}.
\end{align}

Denote by $\mu=\frac{1}{\tilde{N}}\sum_{i=1}^{\tilde{N}}y_i$. When $\tilde{N}$ is sufficiently large, $\mu$ will approach the expectation of $y_i$, which is derived as
\begin{align}
\mu&=\int_{10\log_{10}\delta_{\min}}^{10\log_{10}\delta_{\max}}yf(y)dy\nonumber \\
&=\frac{10\delta_{\max}\log_{10}(\delta_{\max})-10\delta_{\min}\log_{10}(\delta_{\min})}{\delta_{\max}-\delta_{\min}}-\frac{10}{\ln10}
\end{align}

Denoted by $\chi$ the variance of the random variable $y_i$, we have
\begin{align}
\chi&=\int_{10\log_{10}\delta_{\min}}^{10\log_{10}\delta_{\max}}(y-\mu)^2f(y)dy\nonumber\\
&=\frac{100}{(\ln10)^2}-\frac{100\delta_{\max}\delta_{\min}\log_{10}(\delta_{\max}/\delta_{\min})}{(\delta_{\max}-\delta_{\min})^2}.
\end{align}

When $\tilde{N}$ is sufficiently large, we have $\sum_{n=1}^{N}(10\log_{10}d_i)^2=\sum_{n=1}^{\tilde{N}}y_i^2=\tilde{N}(\chi+\mu^2)$. Hence the error covariance matrix can be in \eqref{eq:reducedCovarance} can be written as
\begin{align}
C_{LS}&=(\alpha+\sigma^2/c)\left[
\begin{matrix}\tilde{N} & -\tilde{N}\mu \\ -\tilde{N}\mu  & \tilde{N}(\chi+\mu^2)\end{matrix}
\right]^{-1}\nonumber\\
&=\frac{\alpha+\sigma^2/c}{\tilde{N}\chi}\left[
\begin{matrix}\chi+\mu^2 & \mu \\ \mu  & 1\end{matrix}
\right]
\end{align}
Together with \eqref{eq:ParaestiError}, it completes the proof of Lemma~\ref{lem:estiError}.

%

\bibliographystyle{ieeetr}
\bibliography{CKM}

\end{document}